\newcommand{\regL}{\mathsf{L}}
\newcommand{\Int}{\mathbb{Z}}
\newcommand{\ket}[1]{|#1\rangle}
\newcommand{\norm}[1]{\|#1\|}
\newcommand{\normmax}[1]{\|#1\|_\infty}
\newcommand{\E}{\mathbb{E}}
\newcommand{\ceil}[1]{\left\lceil #1 \right\rceil}
\newcommand{\floor}[1]{\left\lfloor #1 \right\rfloor}
\providecommand{\Aa}{\mathcal{A}}
\providecommand{\Cc}{\mathcal{C}}
\providecommand{\Cc}{\mathsf{C}}
\providecommand{\Hh}{\mathcal{H}}
\providecommand{\Hhh}{\mathscr{H}}
\providecommand{\Kk}{\mathcal{K}}
\providecommand{\basis}{\mathcal{B}}
\providecommand{\Bb}{\mathcal{B}}
\providecommand{\Ss}{\mathcal{S}}
\newcommand{\poly}{\mathrm{poly}}
\newcommand{\supp}{\mathrm{supp}}
\providecommand{\Cc}{\mathcal{C}}
\newcommand{\gentrap}{\mathsf{GENTRAP}}
\newcommand{\invert}{\mathsf{INVERT}}
\newcommand{\LWE}[1]{\mathsf{LWE}_{#1}}
\newcommand{\ADD}{\mathsf{ADD}}
\newcommand{\MULT}{\mathsf{MULT}}
\newcommand{\state}{\mathsf{State Generation}}
\title{Test of Quantumness with Small-Depth Quantum Circuits} 
\titlerunning{Test of Quantumness with Small-Depth Quantum Circuits} 
\author{Shuichi Hirahara}{National Institute of Informatics, Japan}{}{}{}
\author{Fran\c{c}ois Le Gall}{Graduate School of Mathematics, Nagoya University, Japan}{}{}{}
\authorrunning{S. Hirahara and F. Le Gall} 
\keywords{Quantum computing, small-depth circuits, quantum cryptography} 
\begin{document}

\maketitle

\begin{abstract}
Recently Brakerski, Christiano, Mahadev, Vazirani and Vidick (FOCS 2018) have shown how to construct a test of quantumness based on the learning with errors ($\LWE{}$) assumption: a test that can be solved efficiently by a quantum computer but cannot be solved by a classical polynomial-time computer under the $\LWE{}$ assumption. This test has lead to several cryptographic applications. In particular, it has been applied to producing certifiable randomness from a single untrusted quantum device, self-testing a single quantum device and device-independent quantum key distribution. 

In this paper, we show that this test of quantumness, and essentially all the above applications, can actually be implemented by a very weak class of quantum circuits: constant-depth quantum circuits combined with logarithmic-depth classical computation. This reveals novel complexity-theoretic properties of this fundamental test of quantumness and gives new concrete evidence of the superiority of small-depth quantum circuits over classical computation.
\end{abstract}

\section{Introduction}
\subparagraph{Background.}
A very active research area in quantum computing is proving the superiority of ``weak'' models of quantum computation, such as small-depth quantum circuits, over classical computation. The main motivation is that such models are expected to be much easier to implement than universal quantum computation (e.g., polynomial-size quantum circuits) --- Indeed in the past years we have been witnessing the development of several small-scale quantum computers (see, e.g., \cite{Wikipedia} for information about current quantum computers). 

Under assumptions such as the non-collapse of the polynomial hierarchy or the hardness of (appropriate versions of) the permanent, strong evidence of the superiority of weak classes of quantum circuits has been obtained from the 2000s \cite{Aaronson+STOC11,Aaronson+A14,Aaronson+CCC17,Bouland+18,Bremner+10,Bremner+PRL16,Bremner+17,Fahri+16,Fujii+PRL18,Fujii+16,Morimae+PRL14,Terhal+04}.
A recent breakthrough by Bravyi, Gosset and K\"onig \cite{Bravyi+18}, further strengthened by subsequent works \cite{Bene+18, Bravyi+FOCS19, Coudron+18,LeGallCCC19}, showed an \emph{unconditional} separation between the computational powers of quantum and classical small-depth circuits by exhibiting a computational task that can be solved by constant-depth quantum circuits but requires logarithmic depth for classical circuits. A major shortcoming, however, is that logarithmic-depth classical computation is a relatively weak complexity class. Due to the notorious difficulty of proving superlogarithmic lower bounds on the depth of classical circuits, showing significantly stronger unconditional separations seems completely out of reach of current techniques.

Progress has nevertheless been achieved recently by modifying the concept of computational problem, and considering \emph{interactive problems} (problems consisting of several rounds of interaction between the computational device and a verifier). Grier and Schaeffer \cite{Grier+STOC20}, in particular, showed that there exists an interactive problem that can be solved by constant-depth quantum circuits but such that any classical device solving it would solve $\oplus \regL$-problems. This is a stronger evidence of the superiority of constant-depth quantum circuits since the complexity class $\oplus \regL$ is expected to be significantly larger than logarithmic-depth classical computation. On the other hand, problems in $\oplus \regL$ are still tractable classically since they can be solved in polynomial time.\footnote{More precisely, we have the inclusions $\mathsf{NC}_1\subseteq\regL\subseteq \oplus \regL\subseteq \mathsf{NC}_2\subseteq \mathsf{P}$.} 

Another significant development was achieved by Brakerski, Christiano, Mahadev, Vazirani and Vidick \cite{Brakerski+FOCS18} who proposed, using some techniques from \cite{MahadevFOCS18}, a test of quantumness based on the Learning with Errors ($\LWE{}$) assumption, which states that the learning with error problem (informally, inverting a ``noisy'' system of equations) cannot be solved in polynomial time. (See also \cite{Brakerski+TQC20, Kahanamoku+21} for variants of this test.) They showed that this test can be passed with high probability using a polynomial-time quantum device but cannot be solved by any polynomial-time classical device under the $\LWE{}$ assumption, which is a compelling evidence of the superiority of quantum computing.\footnote{We stress that the quantum protocol that passes the test does not solve the learning with error problem.} A crucial property of this test is that checking if the computational device passes the test (which thus means checking if the computational device is quantum) can be done efficiently --- this property is not known to be true for many other tests from prior works in quantum supremacy (e.g., \cite{Aaronson+STOC11,Aaronson+A14,Aaronson+CCC17,Bouland+18,Bremner+10,Bremner+PRL16,Bremner+17,Fahri+16,Fujii+PRL18,Fujii+16,Morimae+PRL14,Terhal+04}.)
Finally, the test of quantumness from \cite{Brakerski+FOCS18} has another fundamental property: it can be shown that the only way for a computationally bounded quantum prover to pass the test is to prepare precisely the expected quantum state.\footnote{The proof of this statement relies on the (standard) assumption that the learning with error problem is hard for computationally bounded quantum computation as well.} This property makes it possible to control a computationally bounded quantum prover, and has already lead to many cryptographic applications: producing certifiable randomness from a single untrusted (computationally bounded) quantum device \cite{Brakerski+FOCS18}, self-testing of a single quantum device \cite{Metger+21} and device-independent key distribution \cite{Metger+20}.

\subparagraph{Our results.}
In this paper we investigate complexity-theoretic aspects of quantum protocols passing the above test of quantumness based on $\LWE{}$. While the quantum protocol from \cite{Brakerski+FOCS18} can clearly be implemented in polynomial time, and while prior works discussed its practical realization and gave some promising numerical estimates on the number of qubits needed for its implementation (for instance, Ref.~\cite{Brakerski+FOCS18} mentioned 2000 qubits for a protocol providing 50 bits of security), to our knowledge several theoretical aspects, and in particular depth complexity, have not been investigated so far. 

 We first isolate the main computational task solved by a quantum protocol passing the test. This computational problem, which we denote $\state$, is presented in Section~\ref{sec:prep}. Informally, it asks to prepare a quantum superposition of an arbitrary vector $x$ and its shift $x-s$, where $s$ denotes the solution of the ``noisy'' system of linear equations used in the $\LWE{}$ assumption. Our main technical contribution (the formal statement is in Section~\ref{sec:prep}) shows that this problem can be solved by a constant-depth quantum circuit combined with efficient (low-complexity) classical computation:

\begin{theorem}[Informal version]\label{th:main}
The computational task $\state$ can be solved by a constant-depth quantum circuit combined with logarithmic-depth classical computation. 
\end{theorem}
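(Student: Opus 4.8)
The plan is to reduce $\state$ to a single arithmetic primitive and then implement that primitive in small depth. A quantum protocol passing the $\LWE{}$ test of quantumness produces the required superposition $\frac{1}{\sqrt 2}\bigl(\ket{x}+\ket{x-s}\bigr)$ in three stages: (1) prepare a uniform superposition $\sum_{c,x}\ket{c}\ket{x}$ over the domain of the noisy trapdoor claw-free function, together with a superposition $\sum_e\sqrt{D(e)}\,\ket{e}$ over its noise term; (2) coherently evaluate the function $f_{A,b}(c,x,e)=Ax+c\cdot b+e \bmod q$ into a fresh register; (3) measure the image register, which collapses the state onto the two preimages of a random claw and so yields the desired superposition, and then perform any further single-qubit basis-change measurements the test demands, reporting the outcomes. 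For suitable parameters --- in particular, with modulus $q$ a power of two --- stage~(1) is one layer of Hadamard gates on the domain register, and stage~(3) consists of single-qubit measurements followed by trivial formatting of outcomes; both are constant depth. So the whole of $\state$ reduces to performing stage~(2) with a constant-depth quantum circuit and logarithmic-depth classical computation.

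For stage~(2), observe that $f_{A,b}$ is, up to the final reduction, an affine map over $\Int_q$: each output coordinate is $\sum_j A_{ij}x_j + c\,b_i + e_i$, an iterated sum of $n+O(1)$ integers of $O(\log q)=O(\log n)$ bits followed by a reduction modulo $q$. Iterated addition, multiplication by a constant, and reduction modulo an arbitrary integer all lie in $\mathsf{TC}^0$; so, wrapping the computation together with its uncomputation in the usual pattern of computing the function, copying out the answer, and uncomputing --- a constant-factor overhead in depth --- realizes $f_{A,b}$ as a clean isometry by a constant-depth quantum circuit of bounded fan-in gates together with \emph{fan-out} gates, using $\mathsf{TC}^0\subseteq\mathsf{QNC}^0_f$ (H\o{}yer and \v{S}palek). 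It remains to eliminate the fan-out gates. A $k$-qubit fan-out is equivalent, up to single-qubit Cliffords, to consuming a $k$-qubit cat state and performing a measurement with a Pauli correction; all the cat states needed --- their number and sizes being fixed by the (data-independent) circuit --- can be produced in parallel by fusing constant-depth-preparable blocks via Bell measurements, the corrections being prefix parities of the measurement outcomes, computable in $\mathsf{NC}^1$. Thus each fan-out is traded for constant quantum depth plus a single round of logarithmic-depth classical processing; the quantum part stays constant depth and the classical part stays logarithmic depth.

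The care is concentrated in two places, which I would treat as follows. First, the noise superposition $\sum_e\sqrt{D(e)}\,\ket{e}$ is a tensor product over the $m$ coordinates, but for a generic discrete Gaussian $D$ each single-coordinate factor $\sum_{e_i}\sqrt{D(e_i)}\,\ket{e_i}$ is a fixed, correlated state on $\Theta(\log n)$ qubits that is not readily preparable by a bounded fan-in constant-depth circuit; either one argues that it lies within reach of $\mathsf{QNC}^0_f$ (plausible, given the power of fan-out, if the amplitudes of $D$ are suitably computable), or, more robustly, one instantiates the test with a bounded, product noise distribution whose single-coordinate factor \emph{is} constant-depth preparable --- variants of the test using simpler noise are known~\cite{Brakerski+TQC20} --- and then checks that the claw-free and adaptive-hardcore-bit properties underlying the test survive the change. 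Second, one must verify that the $\mathsf{QNC}^0_f$ implementation of $f_{A,b}$ genuinely composes into a clean constant-depth isometry $\ket{c,x,e}\ket{0}\mapsto\ket{c,x,e}\ket{f_{A,b}(c,x,e)}$: that uncomputing carries and other garbage does not covertly increase the depth, and that the rounds of measurement and classical feedforward are scheduled so that the classical subcomputations run in parallel and contribute logarithmic rather than superlogarithmic total depth. I expect the noise-state preparation, together with re-checking that the cryptographic analysis of the test is unaffected, to be the part that requires the most work; granting it, stages~(1)--(3) compose into a constant-depth quantum circuit assisted by logarithmic-depth classical computation that solves $\state$.
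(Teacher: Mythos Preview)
Your outline is essentially the paper's strategy: reduce to evaluating the affine map $Ax+bu$ coherently, implement that via $\mathsf{TC}^0\subseteq\mathsf{QNC}^0_f$ (H{\o}yer--\v{S}palek, Takahashi--Tani), and discharge fan-out gates by gate teleportation with $\mathsf{NC}^1$ corrections. You also correctly isolate the noise-state preparation as the crux.

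Where the paper differs is precisely in how it resolves that crux, and in two structural choices. First, rather than preparing a discrete Gaussian $\sum_e\sqrt{D(e)}\,\ket{e}$ or appealing to a variant test, the paper abstracts what is actually needed from the noise state into two properties---$B$-\emph{boundedness} (support in a small $\ell_\infty$ box) and $(\varepsilon,B_V)$-\emph{robustness} ($\langle\varphi|\varphi+e\rangle\ge 1-\varepsilon$ for all small shifts $e$)---and observes that the \emph{uniform} superposition over a power-of-two interval $\{-2^{r-1},\dots,2^{r-1}-1\}^m$ already has both. That state is literally Hadamards plus a modular shift, so the hard part you flagged becomes a two-line calculation (Theorem~\ref{th:creation}). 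Second, the paper does not compute $Ax+bu+e$ into a fresh register and then uncompute~$e$; it adds $Ax+bu$ \emph{in place} into the noise register $\ket{\varphi}$, so there is no garbage to uncompute and no $\ket{e}$ register left entangled after the image measurement. Third, your reduction of stage~(1) to ``$q$ a power of two'' conflicts with the hardness statement, which takes $q$ prime; the paper instead prepares $\frac{1}{\sqrt q}\sum_{x\in\Int_q}\ket{x}$ for general $q$ to inverse-polynomial precision via the H{\o}yer--\v{S}palek construction and absorbs the error into the final $\varepsilon$.

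So your proposal is sound as a plan; what it lacks is the paper's concrete replacement of the Gaussian by a uniform box (and the bounded/robust framework that justifies it), the in-place addition that eliminates the uncomputation step, and the treatment of non-power-of-two modulus.
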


The model of quantum circuits we consider in Theorem \ref{th:main} is described formally in Section~\ref{subsec:prelim-circuits} and is reminiscent of some models used in prior works on measurement-based quantum computing (in particular Refs.~\cite{Broadbent+09,Browne+10}).
The primary motivation for considering this model is as follows: compared with the practical cost of implementing quantum computation, classical computation (and especially low-complexity computation such as logarithmic-depth classical computation) can be considered as a free resource and thus may not be included in the depth complexity. One possible criticism of our model is that the quantum states created by our constant-depth quantum circuits need to be kept coherent while the logarithmic-depth classical computation is performed, which may be an issue since in terms of decoherence waiting is essentially as difficult as performing quantum computation. We can however argue that classical logarithmic-depth classical computation should be implementable significantly faster than logarithmic-depth quantum computation, thus limiting the impact of decoherence. 

As mentioned above, $\state$ is the main computational task used in the test of quantumness based on $\LWE{}$ and its applications given in \cite{Brakerski+FOCS18, Brakerski+TQC20, Metger+20,Metger+21} (the other quantum steps indeed only consist in measuring the state generated in an appropriate basis). As a consequence of Theorem \ref{th:main}, the whole test of quantumness and its applications to producing certifiable randomness, self-testing and device-independent key distribution can thus immediately be implemented by constant-depth quantum circuits combined with logarithmic-depth classical computation. For completeness, we describe in detail how to apply our construction with the whole test of quantumness from \cite{Brakerski+FOCS18}, which was actually only sketched in prior works (since those works focused on applications of the test), in Section~\ref{sec:prior}.

\subparagraph{Overview of our techniques.}
Our main technical contribution is Theorem \ref{th:main}, which shows how to solve $\state$ using constant-depth quantum circuits (in our model allowing some low-complexity classical pre/processing). This is done by modifying the construction of prior works in two major ways.

Our first contribution is to show how to construct in constant depth a quantum state robust against small ``noise''. In~\cite{Brakerski+FOCS18} the construction was done by considering a state with amplitudes taken from a wide-enough Gaussian distribution, and creating this state using the approach from the seminal paper by Regev \cite{Regev09}, which itself relied on a technique by Grover and Rudolph \cite{Grover+02}. To our knowledge, the resulting construction, while definitely implementable with quantum circuits of polynomial size, does not seem to be implementable in constant depth. Instead, our main idea (see Theorem \ref{th:creation} in Section \ref{sec:prep})  is to use a quantum state with amplitudes taken from a much simpler distribution (a wide-enough truncated uniform distribution) that can be implemented in constant depth. 

The second contribution (Theorem \ref{th:preparation} in Section \ref{sec:prep})  is analyzing carefully how to implement in the quantum setting the map used in the learning with error problem (note that in the quantum setting the map needs to be applied in superposition, which requires a quantum circuit). We observe that when given as input a state robust against small noise, the remaining computational task involves only algebraic operations modulo~$q$, for some large integer $q$. We then show that prior works by H{\o}yer and Spalek \cite{Hoyer+05} and Takahashi and Tani~\cite{Takahashi+16} imply that  implementing arithmetic operations modulo~$q$ exactly and generating a good  approximation of the uniform superposition of all elements of $\{0,1,\ldots,q-1\}$ can be done using constant-depth quantum circuits if unbounded fanout gates are allowed. We finally show that unbounded fanout gates can be implemented in our model using a technique called gate teleportation \cite{Gottesman+99,Leung04,Nielsen03}.

\section{Preliminaries}
\subsection{General notations}
In this paper the notation $\log$ represents the logarithm in basis 2. 
For any integer $q$, we write $\Int_q=\{0,1\ldots,q-1\}$.  As usual in lattice-based cryptography, we will often identify $\Int_q$ with the set of integers $\{-\ceil{q/2}+1,\ldots,\floor{q/2}\}$. For any $a\in\Int_q$, we write $J(a)\in\{0,1\}^{\ceil{\log q}}$ its binary representation, as in \cite{Brakerski+FOCS18}.
Given a vector $x\in \Int_q^{m}$, we write $\norm{x}=\sqrt{\sum_{i=1}^m |x_i|^2}$ and $\normmax{x}=\max_{i\in\{1,\ldots,m\}}|x_i|$, and write $J(x)=(J(x_1),\ldots,J(x_m))\in \{0,1\}^{m\ceil{\log q}}$ its binary representation. Given a matrix $A\in \Int_q^{m\times n}$, we define the distance of~$A$ as the minimum over all the non-zero vectors $x\in\Int_q^m$, of the quantity $\norm{Ax}$.

\subsection{Lattice-based cryptography}\label{sub:lattice}
For a security parameter $\lambda$, let $m,n,q$ be integer functions of $\lambda$. Let $\chi$ be a distribution over~$\Int_q$. The $\LWE{m,n,q,\chi}$ problem is to distinguish between the distributions $(A,As+e)$ and $(A,u)$, where $A\in\Int_q^{m\times n}$, $s\in\Int_q^n$ and $u\in\Int_q^m$ are uniformly random and $e\gets \chi^m$. The corresponding hardness assumption is that no polynomial-time algorithm can solve this problem with non-negligible advantage in~$\lambda$.  As in \cite{Brakerski+FOCS18}, we write $\LWE{n,q,\chi}$ the task of solving $\LWE{m,n,q,\chi}$ for any function $m$ that is at most a polynomial in $n\log q$.

The most usual distribution $\chi$ used in lattice-based cryptography is the truncated discrete Gaussian distribution, which we now introduce. For any positive integer $q$ and any positive real number $B$, the truncated discrete Gaussian distribution over $\Int_q$ with parameter $B$, which we denote $D_{q,B}$, is defined as 
$D_{q,B}(x)=(e^{-\pi|x|^2/B^2})/\gamma$ if $|x|\le B$ and $D_{q,B}(x)=0$ otherwise,
%
for any $x\in \Int_q$, where $\gamma$ is the normalization factor defined as $\gamma=\sum_{z\in\Int_q, |z|\le B\:\:}e^{-\pi|z|^2/B^2}$.

As in \cite{Brakerski+FOCS18}, we will use the following theorem to generate instances of the learning with error problem.
\begin{theorem}[Theorem~2.6 in \cite{Brakerski+FOCS18} and Theorem 5.1 in \cite{Micciancio+12}]\label{theorem:crypto}
Let $m,n\ge 1$ and $q\ge 2$ be such that $m=\Omega(n\log q)$. There is an efficient randomized algorithm $\gentrap(1^n,1^m,q)$ that returns a matrix $A\in\Int_q^{m\times n}$ and a trapdoor $t_A$ such that the distribution of $A$ is negligibly (in $n$) close to the uniform distribution. Moreover, there is an efficient algorithm $\invert$ that, on input $A$, $t_A$ and $Ax+e$ where $x\in\Int_q^n$ is arbitrary, $\norm{e}\le q/(C\sqrt{n\log q})$ and $C$ is a universal constant, returns $x$ with overwhelming probability over $(A,t_A)\gets \gentrap(1^n,1^m,q)$.  
\end{theorem}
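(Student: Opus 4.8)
The plan is to recall that Theorem~\ref{theorem:crypto} is the Micciancio--Peikert trapdoor construction (restated in the convenient form of \cite{Brakerski+FOCS18}) and to sketch why its two guarantees hold, rather than reproving it from scratch. The backbone is the \emph{gadget matrix} $G\in\Int_q^{n\times nk}$ with $k=\ceil{\log q}$, obtained by placing the row $(1,2,4,\ldots,2^{k-1})$ block-diagonally; $G$ comes equipped with a universally known inverter that, given $G^{\top}s+e$ with $\normmax{e}<q/4$, recovers the binary digits of each entry of $s$ and returns $s$ exactly. Thus gadget inversion is efficient, exact, and tolerates $\ell_\infty$-error up to a constant fraction of $q$.

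The first step is to randomize the gadget. Sample $\bar A\in\Int_q^{\bar m\times n}$ uniformly and a short matrix $R$ (entries from, say, a discrete Gaussian or uniform on $\{-1,0,1\}$), set $t_A=R$, and let $A$ be the vertical stacking of $\bar A$ and $G^{\top}-R^{\top}\bar A$, so $A\in\Int_q^{m\times n}$ with $m=\bar m+nk=\Omega(n\log q)$. The two claims then follow from two standard facts. (i) \emph{Near-uniformity}: because $m=\Omega(n\log q)$, a leftover-hash / regularity argument shows $(\bar A,R^{\top}\bar A)$ --- hence $A$ --- is within negligible (in $n$) statistical distance of uniform; this is precisely where the hypothesis $m=\Omega(n\log q)$ is used. (ii) \emph{Inversion}: writing an input as $(y_1,y_2)$ with $y_1=\bar A x+e_1$ and $y_2=(G^{\top}-R^{\top}\bar A)x+e_2$, the trapdoor lets one compute $y_2+R^{\top}y_1=G^{\top}x+(e_2+R^{\top}e_1)$, a gadget-inversion instance whose Euclidean error norm is at most $\norm{e_2}+\norm{R^{\top}e_1}$, and this is $O(\sqrt{n\log q})\cdot\norm{e}$ once the largest singular value of $R$ is bounded by $O(\sqrt{n\log q})$ (true with overwhelming probability for the chosen distribution of $R$). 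Hence choosing the universal constant $C$ large enough that $\norm{e}\le q/(C\sqrt{n\log q})$ forces this error below $q/4$, so running the gadget inverter recovers $x$; moreover $x$ is the unique preimage, since a random (hence, by (i), also a $\gentrap$-generated) $A$ has ``distance'' far larger than $2\norm{e}$ with overwhelming probability, ruling out any competing $(x',e')$ with small $e'$.

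The step needing the most care is the quantitative bookkeeping in (ii): bounding the operator norm of the trapdoor $R$ with overwhelming probability, tracking how the error norm is amplified under folding, and verifying that a \emph{single} universal constant $C$ can be chosen making $q/(C\sqrt{n\log q})$ simultaneously smaller than the gadget decoding radius and smaller than half the lattice distance of $A$. The remaining ingredients --- the gadget matrix and its decoder, the reduction of $\invert$ to gadget inversion, the leftover-hash argument, and the union bound lower-bounding the distance of a random $A$ --- are all standard, and full details can be found in \cite{Micciancio+12,Brakerski+FOCS18}.
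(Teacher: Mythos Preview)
The paper does not prove Theorem~\ref{theorem:crypto} at all: it is simply imported verbatim from \cite{Brakerski+FOCS18,Micciancio+12} and used as a black box. Your proposal therefore goes strictly beyond the paper by sketching the actual Micciancio--Peikert construction, and the sketch is essentially correct --- the gadget matrix $G$, the randomization $A=[\bar A;\,G^{\top}-R^{\top}\bar A]$ with short trapdoor $R$, the leftover-hash argument for near-uniformity (which is indeed where $m=\Omega(n\log q)$ enters), and the folding step $y_2+R^{\top}y_1=G^{\top}x+(e_2+R^{\top}e_1)$ followed by gadget decoding are precisely the components of \cite{Micciancio+12}. Since the paper treats this as a citation rather than a result to be proved, your write-up is already more than what is required here; if anything, a one-line pointer to \cite{Micciancio+12} would match the paper's own treatment.
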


The matrix $A$ generated by $\gentrap(1^n,1^m,q)$ has distance at least $2q/(C\sqrt{n\log q})$ with overwhelming probability.
Also note that if $\normmax{e}\le q/(C\sqrt{mn\log q})$, then the inequality $\norm{e}\le q/(C\sqrt{n\log q})$ holds. These two observations motivate the following definition: we define $\Kk$ as the set of 5-tuples $(m,n,q,A,u)$ such that $m$, $n$ and $q$ are positive integers, $A\in\Int_q^{m\times n}$ is a matrix of distance at least $2q/(C\sqrt{n\log q})$, where $C$ is the constant from Theorem \ref{theorem:crypto}, and $u\in\Int_q^m$ is a vector that can be written as $u=As+e$ for some $s\in\Int_q^n$ and some $e\in\Int_q^m$ with $\normmax{e}\le q/(C\sqrt{mn\log q})$.
Informally, the set $\Kk$ represents the set of good parameters for the version of $\LWE{}$ we will consider. For technical reasons, we also define the following variant, which enables us to set a stronger upper bound on $\normmax{e}$. For any $B_V> 0$, we define $\Kk_{B_V}\subseteq \Kk$ as the set of 5-tuples $(m,n,q,A,u)\in\Kk$ such that the following two conditions hold:
\begin{itemize}
\item[(i)]
$q\ge B_VC\sqrt{mn\log q}$,
\item[(ii)]
 $u$ can be written as $u=As+e$ for some $s\in\Int_q^n$ and some $e\in\Int_q^m$ with $\normmax{e}\le B_V$.
\end{itemize}

\subsection{Quantum states: bounded and robust states}
We assume that the reader is familiar with the basics of quantum computing and refer to, e.g., \cite{NC00book} for a good reference.

For any positive integer $q$, we write $\Hh_q$ the complex Hilbert space of dimension $q$ with basis $\{\ket{x}\}_{x\in\Int_q}$. 
Quantum states in $\Hh_q$ are (implicitly) implemented using $\ceil{\log q}$ qubits, via the binary encoding of these basis vectors. For any integer $m\ge 1$,  we also consider the Hilbert space $\Hh_q^{\otimes m}$ and associate to it the basis $\{\ket{x}\}_{x\in\Int^m_q}$. A quantum state $\ket{\varphi}$ in $\Hh_q^{\otimes m}$ can thus be written as 
$
\ket{\varphi}=\sum_{x\in\Int_q^m} \alpha_x \ket{x},
$
for complex numbers $\alpha_x$ such that $\sum_{x\in\Int_q^m} |\alpha_x|^2=1$. 
We write its support
$
\supp(\ket{\varphi})=\{x\in \Int_q^m\:|\:\alpha_x\neq 0\}.
$ 
We say that $\ket{\varphi}$ has real amplitudes if $\alpha_x\in\mathbb{R}$ for each $x\in\Int_q^m$. For any vector $e\in\Int_q^m$, we write
$
\ket{\varphi+e}=\sum_{x\in\Int_q^m} \alpha_x \ket{x+e},
$
where the addition is performed modulo $q$.

We now introduce two crucial definitions on which our approach will be based.\footnote{We stress that these two definitions (as well as several definitions of the previous paragraph) are basis-dependent --- we always refer to the canonical basis $\{\ket{x}\}_{x\in\Int^m_q}$. Also note that while Definition \ref{def:rob} can easily be written without the requirement that the state has real amplitude (by replacing $\langle \varphi{\ket{\varphi+e}}$ by $|\langle \varphi{\ket{\varphi+e}}|$, for instance), requiring that the state has real amplitudes will be enough for our purpose and will simplify later calculations.}

\begin{definition} 
Let  $B$ be a positive real number. 
A quantum state $\ket{\varphi}\in\Hh_q^{\otimes m}$ is $B$-bounded if $\normmax{x}< B$ for any element $x\in\supp(\ket{\varphi})$.
\end{definition}

\begin{definition}\label{def:rob}
Let $B$, $\varepsilon$ be two positive real numbers.
A quantum state $\ket{\varphi}\in\Hh_q^{\otimes m}$
is $(\varepsilon,B)$-robust if $\ket{\varphi}$ has real amplitudes and, for any vector $e\in\Int_q^m$ such that $\normmax{e}\le B$, the inequality
$
\langle \varphi{\ket{\varphi+e}}\ge 1-\varepsilon
$
holds.
\end{definition}

Finally, given two states $\ket{\varphi}$ and $\ket{\psi}$ in $\Hh_q^{\otimes m}$, and any positive real number $\varepsilon$, we say that $\ket{\varphi}$ and $\ket{\psi}$ are $\varepsilon$-close if $\norm{\ket{\varphi}-\ket{\psi}}^2\le \varepsilon$.
We also define the notion of $\varepsilon$-closeness to a subspace as follows.
\begin{definition}
Let $\Hh'$ be a subspace of $\Hh_q^{\otimes m}$ and $\varepsilon$ be a positive real number. We say that a state $\ket{\varphi}\in \Hh_q^{\otimes m}$ is $\varepsilon$-close to $\Hh'$ if there exists a quantum state $\ket{\psi}\in\Hh'$ such that $\norm{\ket{\varphi}-\ket{\psi}}^2\le \varepsilon$.
\end{definition}

\subsection{Quantum circuits}\label{subsec:prelim-circuits}
\subparagraph{Universal sets of quantum gates.}
As in the standard model of quantum circuits (see, e.g.,~\cite{NC00book}), in this paper we work with qubits. We consider two sets of elementary gates.  We first consider the set $\basis_\mathrm{r}=\{H, T, CNOT\}$ where $H=\frac{1}{\sqrt{2}}\begin{psmallmatrix}1&1\\
1&-1\end{psmallmatrix}$ is the Hadamard gate, $T=\begin{psmallmatrix}1&0\\
0&e^{i\pi/4}\end{psmallmatrix}$ is the $\pi/8$-phase operation and $CNOT=\begin{psmallmatrix}1&0&0&0\\
0&1&0&0\\
0&0&0&1\\
0&0&1&0\end{psmallmatrix}$ is the controlled-not gate.
%
This is an universal set consisting of a finite number of gates that can approximate any quantum gate with good precision (see Section 4.5.3 of \cite{NC00book} for details). 
The second set we consider, which we denote $\Bb$, contains all the gates acting on 1 qubit and the $CNOT$ operator. Note that this set contains an infinite number of gates.

\subparagraph{Our model.}
We now introduce the class of quantum circuits considered in this paper. Let $r_1$ and $r_2$ be two positive integers, and $\Ss$ be a set of elementary quantum gates (e.g., $\Ss=\Bb_{\mathrm{r}}$ or $\Ss=\Bb$).

A circuit in the class $\Cc(\Ss,r_1,r_2)$ acts on $r_1+r_2$ qubits. These qubits are initialized to the state $\ket{0}^{\otimes (r_1+r_2)}$. The circuit consists of successive layers. 
Each layer consists of a constant-depth quantum circuit over the basis $\Ss$ acting on these $r_1+r_2$ qubits, which does not contains any measurement, followed by measurements in the computational basis of all the first $r_1$ qubits. Consider the $i$-th layer. Let $x_i\in\{0,1\}^{r_1}$ denote the outcome of measuring the first $r_1$ qubits at the end of this layer. Then some classical function $f_i\colon\{0,1\}^{r_1}\to\{0,1\}^{r_1}$ is applied to the $x_i$, and the value $f_i(x_i)$ is given as input to the first $r_1$ qubits of the next layer, i.e., the $r_1$ qubits are reinitialized to the state $\ket{f_i(x_i)}$. We refer to Figure \ref{fig:circuits} for an illustration.

\begin{figure}[ht!]
\centering
\begin{tikzpicture}[scale=0.5,rectnode/.style={thick,shape=rectangle,draw=black,minimum height=30mm, minimum width=8mm},rectnode2/.style={shape=rectangle,draw=black,minimum height=5mm, minimum width=5mm},roundnode/.style={circle, draw=green!60, fill=green!5, very thick, minimum size=7mm}]
    \newcommand\XA{9.6}
    \newcommand\YA{6}
        
    \node[rectnode] (a2) at (0*\XA,1*\YA) {};
    
    \node[rectnode] (b2) at (1*\XA,1*\YA) {};
    
    \node[rectnode] (c2) at (2*\XA,1*\YA) {};


     \draw (-1.5,1.4*\YA) -- (-0.8,1.4*\YA);
     \draw (-1.5,1.1*\YA) -- (-0.8,1.1*\YA);
     \draw (-1.5,0.9*\YA) -- (-0.8,0.9*\YA);
     \draw (-1.5,0.6*\YA) -- (-0.8,0.6*\YA);

     \draw (0.8,0.9*\YA) -- (8.8,0.9*\YA);
     \draw (0.8,0.6*\YA) -- (8.8,0.6*\YA);

     \draw (0.8+\XA,0.9*\YA) -- (8.8+\XA,0.9*\YA);
     \draw (0.8+\XA,0.6*\YA) -- (8.8+\XA,0.6*\YA);

	 \draw (-1.5+\XA,1.4*\YA) -- (-0.8+\XA,1.4*\YA);
     \draw (-1.5+\XA,1.1*\YA) -- (-0.8+\XA,1.1*\YA);
 	 \draw (-1.5+2*\XA,1.4*\YA) -- (-0.8+2*\XA,1.4*\YA);
     \draw (-1.5+2*\XA,1.1*\YA) -- (-0.8+2*\XA,1.1*\YA);
     
     \draw (-1.5+2.3,1.4*\YA) -- (-0.8+2.3,1.4*\YA);
     \draw (-1.5+2.3,1.1*\YA) -- (-0.8+2.3,1.1*\YA);
	 \draw (-1.5+2.3+\XA,1.4*\YA) -- (-0.8+2.3+\XA,1.4*\YA);
     \draw (-1.5+2.3+\XA,1.1*\YA) -- (-0.8+2.3+\XA,1.1*\YA);
     \draw (-1.5+2.3+2*\XA,1.4*\YA) -- (-0.8+2.3+2*\XA,1.4*\YA);
     \draw (-1.5+2.3+2*\XA,1.1*\YA) -- (-0.8+2.3+2*\XA,1.1*\YA);

     \draw (8*\XA+0.25,1*\YA+0) -- (8*\XA+1,1*\YA+0);
     \draw (8*\XA+0.25,1*\YA-0.35) -- (8*\XA+1,1*\YA-0.35);
     \draw (-1.5+2.3+2*\XA,0.9*\YA) -- (-0.8+2.3+2*\XA,0.9*\YA);
     \draw (-1.5+2.3+2*\XA,0.6*\YA) -- (-0.8+2.3+2*\XA,0.6*\YA);

     \node[draw=none,fill=none] at (0.7,1.3*\YA+2.3) {layer 1};
     \node[draw=none,fill=none] at (0.7+\XA,1.3*\YA+2.3) {layer 2};
     \node[draw=none,fill=none] at (0.7+2*\XA,1.3*\YA+2.3) {layer 3};
     \draw[<->] (-1.2,1.3*\YA+1.7) -- (2.8,1.3*\YA+1.7);
     \draw[<->] (-1.2+\XA,1.3*\YA+1.7) -- (2.8+\XA,1.3*\YA+1.7);
     \draw[<->] (-1.2+2*\XA,1.3*\YA+1.7) -- (2.8+2*\XA,1.3*\YA+1.7);
     
     \draw[thick,decorate,decoration={brace,amplitude=1mm}] (-2,-1.2+0.75*\YA) -- (-2,1.2+0.75*\YA);
     \node[draw=none,fill=none] at (-3.5,1.25*\YA) {$\ket{0}^{\otimes r_1}$};
     \node[draw=none,fill=none] at (-1.25,1.27*\YA) {$\vdots$};
     \node[draw=none,fill=none] at (1.25,1.27*\YA) {$\vdots$};
     \node[draw=none,fill=none] at (-1.25+\XA,1.27*\YA) {$\vdots$};
     \node[draw=none,fill=none] at (1.25+\XA,1.27*\YA) {$\vdots$};
     \node[draw=none,fill=none] at (-1.25+2*\XA,1.27*\YA) {$\vdots$};
     \node[draw=none,fill=none] at (1.25+2*\XA,1.27*\YA) {$\vdots$};
     \draw[thick,decorate,decoration={brace,amplitude=1mm}] (-2,-1.2+1.25*\YA) -- (-2,1.2+1.25*\YA);
     \node[draw=none,fill=none] at (-3.5,0.75*\YA) {$\ket{0}^{\otimes r_2}$}; 
     \node[draw=none,fill=none] at (-1.25,0.77*\YA) {$\vdots$};
     \node[draw=none,fill=none] at (-1.25+2*\XA,0.77*\YA) {$\vdots$};
      \node[draw=none,fill=none] at (1.25+2*\XA,0.77*\YA) {$\vdots$};

     \node[draw=none,fill=none] at (1.25+\XA,0.77*\YA) {$\vdots$};
     \node[draw=none,fill=none] at (-1.25+\XA,0.77*\YA) {$\vdots$};
 \node[draw=none,fill=none] at (1.25+0*\XA,0.77*\YA) {$\vdots$};
     \draw[thick,decorate,decoration={brace,amplitude=1mm}] (-2+\XA,-1.2+1.25*\YA) -- (-2+\XA,1.2+1.25*\YA);
     \draw[thick,decorate,decoration={brace,amplitude=1mm}] (-2+2*\XA,-1.2+1.25*\YA) -- (-2+2*\XA,1.2+1.25*\YA);    
      \node[draw=none,fill=none] at (-3.5+\XA,1.25*\YA) {\footnotesize$\ket{f_1(x_1)}$};
	 \node[draw=none,fill=none] at (-3.5+2*\XA,1.25*\YA) {\footnotesize$\ket{f_2(x_2)}$};
    
     \draw[thick,decorate,decoration={brace,mirror,amplitude=1mm}] (3.1+0*\XA,-1.2+1.25*\YA) -- (3.1+0*\XA,1.2+1.25*\YA); 
     \node[rectnode2] (b2) at (0*\XA+2,1.4*\YA) {};
     \node[rectnode2] (b2) at (0*\XA+2,1.1*\YA) {};
     \draw (-1.5+4,1.4*\YA) -- (-1+4,1.4*\YA);
     \draw (-1.5+4,1.1*\YA) -- (-1+4,1.1*\YA);
     \node[draw=none,fill=none] at (2.8,1.27*\YA) {$\vdots$};
      \node[draw=none,fill=none] at (-5.7+\XA,1.25*\YA) {\footnotesize $x_1$};

     \draw[thick,decorate,decoration={brace,mirror,amplitude=1mm}] (3.1+1*\XA,-1.2+1.25*\YA) -- (3.1+1*\XA,1.2+1.25*\YA); 
     \node[rectnode2] (b2) at (1*\XA+2,1.4*\YA) {};
     \node[rectnode2] (b2) at (1*\XA+2,1.1*\YA) {};
     \draw (-1.5+\XA+4,1.4*\YA) -- (-1+4+\XA,1.4*\YA);
     \draw (-1.5+\XA+4,1.1*\YA) -- (-1+4+\XA,1.1*\YA);
     \node[draw=none,fill=none] at (2.8+\XA,1.27*\YA) {$\vdots$};
      \node[draw=none,fill=none] at (-5.7+2*\XA,1.25*\YA) {\footnotesize$x_2$};

     \draw[->, thick] (0*\XA+1.9,1.35*\YA) -- (0*\XA+2.1,1.47*\YA);
     \draw[thick] (0*\XA+1.6,1.35*\YA) .. controls  (0*\XA+1.8,1.45*\YA) and  (0*\XA+2.2,1.45*\YA) .. (0*\XA+2.4,1.35*\YA);
     \draw[->, thick] (0*\XA+1.9,1.05*\YA) -- (0*\XA+2.1,1.17*\YA);
     \draw[thick] (0*\XA+1.6,1.05*\YA) .. controls  (0*\XA+1.8,1.15*\YA) and  (0*\XA+2.2,1.15*\YA) .. (0*\XA+2.4,1.05*\YA);

     \draw[->, thick] (1*\XA+1.9,1.35*\YA) -- (1*\XA+2.1,1.47*\YA);
     \draw[thick] (1*\XA+1.6,1.35*\YA) .. controls  (1*\XA+1.8,1.45*\YA) and  (1*\XA+2.2,1.45*\YA) .. (1*\XA+2.4,1.35*\YA);
     \draw[->, thick] (1*\XA+1.9,1.05*\YA) -- (1*\XA+2.1,1.17*\YA);
     \draw[thick] (1*\XA+1.6,1.05*\YA) .. controls  (1*\XA+1.8,1.15*\YA) and  (1*\XA+2.2,1.15*\YA) .. (1*\XA+2.4,1.05*\YA);

    \draw[thick,decorate,decoration={brace,mirror,amplitude=1mm}] (3.1+2*\XA,-1.2+1.25*\YA) -- (3.1+2*\XA,1.2+1.25*\YA); 
     \node[rectnode2] (b2) at (2*\XA+2,1.4*\YA) {};
     \node[rectnode2] (b2) at (2*\XA+2,1.1*\YA) {};
     \draw (-1.5+2*\XA+4,1.4*\YA) -- (-1+4+2*\XA,1.4*\YA);
     \draw (-1.5+2*\XA+4,1.1*\YA) -- (-1+4+2*\XA,1.1*\YA);
     \node[draw=none,fill=none] at (2.8+2*\XA,1.27*\YA) {$\vdots$};
      \node[draw=none,fill=none] at (-5.7+3*\XA,1.25*\YA) {\footnotesize $x_3$};

     \draw[->, thick] (2*\XA+1.9,1.35*\YA) -- (2*\XA+2.1,1.47*\YA);
     \draw[thick] (2*\XA+1.6,1.35*\YA) .. controls  (2*\XA+1.8,1.45*\YA) and  (2*\XA+2.2,1.45*\YA) .. (2*\XA+2.4,1.35*\YA);
     \draw[->, thick] (2*\XA+1.9,1.05*\YA) -- (2*\XA+2.1,1.17*\YA);
     \draw[thick] (2*\XA+1.6,1.05*\YA) .. controls  (2*\XA+1.8,1.15*\YA) and  (2*\XA+2.2,1.15*\YA) .. (2*\XA+2.4,1.05*\YA);
\end{tikzpicture}
\caption{A quantum circuit of the class $\Cc$ consisting of three layers. Each rectangular box represents a quantum circuit (without measurements) of constant depth with gates in the set $\Ss$.}\label{fig:circuits}
\end{figure}

The complexity of a circuit in the class defined above depends on the number of qubits $r_1+r_2$, the number of layers and the classical complexity of computing function $f_i$'s. We are mainly interested in circuits that have a constant number of layers and such that all functions can be computed efficiently classically. We formally define this class below.  

We define the class $\Cc_0(\Ss)$ of families of circuits $\{C_n\}_{n\in\mathbb{N}}$ such that the following conditions hold:
\begin{itemize}
\item
for each $n\in \mathbb{N}$, we have $C_n\in \Cc(\Ss,r_1,r_2)$ for some integers $r_1,r_2$ such that $r_1+r_2=n$;
\item
for each $n\in \mathbb{N}$, the number of layers in $C_n$ is constant (i.e., independent of $n$);
\item
for each $n\in \mathbb{N}$,
all the functions $f_i$'s of $C_n$ can be computed by a $O(\log n)$-depth classical circuit.
\end{itemize}
We require that the family is logarithmic-space uniform, i.e., there exists a classical Turing machine that on input $1^n$ outputs a classical description of $C_n$ (as well as descriptions of the circuits computing the functions $f_i$'s) in $O(\log n)$ space.

\subsection{Clifford circuits and quantum arithmetic}\label{sub:clifford}
\subparagraph{Clifford circuits.}
Let us consider the Pauli gates
$X=
\begin{psmallmatrix}
0&1\\
1&0
\end{psmallmatrix}
$
and
$Z=
\begin{psmallmatrix}
1&0\\
0&-1
\end{psmallmatrix}
$ and the phase gate $S=
\begin{psmallmatrix}
1&0\\
0&i
\end{psmallmatrix}$.
A quantum circuit consisting only of gates from the set~$\{X,Z,S, H,CNOT\}$ is called a Clifford circuit.\footnote{Since $X=S^2$ and $Z=HS^2H$, the two Pauli gates can actually be removed from this gate set.} 
Such a circuit can be implemented by a quantum circuit of class $\Cc_0(\Bb_\mathrm{r})$ acting on $\poly(s)$ qubits, where $s$ is the number of gates in the original circuit, via a technique called gate teleportation first introduced by Gottesman and Chuang~\cite{Gottesman+99} and then developed into a computational model by Leung \cite{Leung04} and Nielsen~\cite{Nielsen03}
(see also, e.g.,~\cite{Brakerski+20,Jozsa05} for good presentations of this technique). 

A concrete example, which we will actually heavily use, is the unbounded fanout gate over $\Hh_2^{\otimes m}$. This unitary gate
maps the basis state $\ket{x_1,x_2,\ldots,x_{m-1},x_m}$ to $\ket{x_1,x_1\oplus x_2,\ldots,x_{1}\oplus x_{m-1},x_1\oplus x_m}$, for any $x_1,\ldots,x_m\in\{0,1\}$. This gate can easily be written as a circuit consisting of $m-1$ successive CNOT gates (the depth of such a circuit implementation is thus linear in $m$). Using the above approach, this gate can be implemented by a quantum circuit of class $\Cc_0(\Bb_\mathrm{r})$ acting on $\poly(m)$ qubits. A concrete decomposition, which uses only two layers, is presented in Section~6 of \cite{Broadbent+09}.

\subparagraph{Modular arithmetics.}
Let us consider the following unitary operations (where the arithmetic operations are performed modulo $q$ and $\omega$ is a $q$-th root of unity): 
\begin{itemize}
\item
the quantum Fourier transform $F_q$ over $\Hh_q$, such that 
$
F_q\ket{i}=\frac{1}{\sqrt{q}}\sum_{j=0}^{q-1}\omega^{ij}\ket{j}
$
for any $i\in \Int_q$;
\item
the unitary operation $\ADD_q$ over $\Hh_q^{\otimes 2}$ that maps $\ket{i}\ket{j}$ to $\ket{i}\ket{i+j}$ for any $i,j\in \Int_q$;
\item
the unitary operation $\MULT_q$ over $\Hh_q^{\otimes 3}$ that maps $\ket{i}\ket{j}\ket{k}$ to $\ket{i}\ket{j}\ket{k+ij}$ for any $i,j,k\in \Int_q$.
\end{itemize}

We now discuss how to obtain exact implementations for $\ADD_q$ and $\MULT_q$, and also for arbitrary linear maps over $\Int_q$ (exact implementation of these gates will be crucial for implementing our test of quantumness in constant depth).  Takahashi and Tani \cite{Takahashi+16} showed how to implement exactly $\ADD_q$ and $\MULT_q$ in constant depth by circuits that use gates in~$\basis$ and unbounded fanout gates acting on $\poly(\log q)$ qubits, by showing that quantum threshold gates, which are enough to implement all these operations (as first pointed out by H{\o}yer and Spalek \cite{Hoyer+05}, based on prior works on classical threshold gates \cite{Siu+93}), can be implemented in constant depth by such circuits. Since each unbounded fanout gate can be implemented by a quantum circuit of class $\Cc_0(\Bb_\mathrm{r})$ acting on $\poly(\log q)$ qubits, as discussed above, these arithmetic operations can be exactly implemented by quantum circuits of class $\Cc_0(\Bb)$ acting on $\poly(\log q)$ qubits. As discussed in \cite{Hoyer+05,Takahashi+16}, the same approach can be applied to implement iterated addition, and more generally any linear map $f\colon \Int_q^n\to \Int_q$, since such maps can be computed in constant depth using classical threshold gates as well. This can easily be further generalized to give implementation of any linear map $f\colon \Int_q^n\to \Int_q^m$ by a quantum circuit of class $\Cc_0(\Bb)$ acting on $\poly(m,n,\log q)$ qubits.

Unfortunately, it is still unknown if the operator $F_q$ can be implemented exactly in constant depth with a circuit using only elementary gates in $\Bb$ and unbounded fanout gates (see Section 6 of~\cite{Takahashi+16}). For the protocol constructed in this paper, however, we will only need to apply $F_q$ to the state $\ket{0}\in\Hh_q$, i.e., we only need to prepare the state $F_q\ket{0}=\frac{1}{\sqrt{q}}\sum_{x\in\Int_q}\ket{x}$. Lemma~4.18 in \cite{Hoyer+05} shows that this task can be implemented in constant depth with exponential precision (which will be enough for our purpose): there exists a constant-depth circuit of size $\poly(\log q)$ using gates in $\Bb$ and unbounded fanout gates that computes a state which is at distance at most $1/q^2$ of the state $F_q\ket{0}$. By converting each unbounded fanout gate, this circuit can immediately be converted into a circuit in the class $\Cc_0(\Bb)$ acting on $\poly(\log q)$ qubits.

\section{Quantum State Generation using Small-Depth Circuits}\label{sec:prep}
In this section we describe the main computational task solved by a quantum prover in the test of quantumness based on $\LWE{}$ we present in Section \ref{sec:prior} (as well as in prior works \cite{Brakerski+FOCS18, Brakerski+TQC20, Metger+20,Metger+21}), and show how to solve it using a quantum circuit of small depth.

\subsection{Statement of the problem}
For any $B_V> 0$ and any $k=(m,n,q,A,u)\in\Kk_{B_V}$, where $\Kk_{B_V}$ is the set of parameters defined in Section~\ref{sub:lattice}, let $\Lambda_k\subseteq\Int_q^m$ denote the set of vectors $y\in\Int_q^m$ such that there exists a vector $x\in\Int_q^n$ for which  $\norm{Ax-y}\le q/(C\sqrt{n\log q})$. Note that such $x$ is necessarily unique, since $A$ has distance at least $2q/(C\sqrt{n\log q})$. Let us write this vector $x_y$. Note that $x_u=s$ using the notations of Section \ref{sub:lattice}, i.e., defining $s$ as the (unique) vector in $\Int_q^n$ such that $u$ can be written as $u=As+e$ for $e\in\Int_q^m$ with $\normmax{e}\le B_V$. For any $y\in\Lambda_k$, define the quantum state 
\[
\ket{\Psi_y} = \frac{1}{\sqrt{2}}\left(\ket{0}\ket{x_y}+\ket{1}\ket{x_y-x_u}\right).
\]
Let $\Hhh_k$ be the subspace of $\Hh_2\otimes\Hh_q^n\otimes \Hh_q^m$ generated by the states $\{\ket{\Psi_y}\ket{y}\}_{y\in\Lambda_k}$.

The computational problem we consider in this section, which we denote $\state$, has two parameters $\varepsilon,B_V>0$, and is defined as follows. This is the main task solved by the quantum protocols passing our test of quantumness, as well as in the tests used in prior works \cite{Brakerski+FOCS18, Brakerski+TQC20, Metger+20,Metger+21}.\vspace{2mm}

$\state(\varepsilon,B_V)$

\framebox[1.05\width]{

Given $k\in\Kk_{B_V}$, create a quantum state $\varepsilon$-close to $\Hhh_k$.
}\vspace{3mm}

Here is our main theorem, which shows that the problem can be solved by a small-depth quantum circuit when $q$ is large enough.

\addtocounter{theorem}{-5}
\begin{theorem}[Formal version]
For any $\varepsilon,B_V> 0$, the problem $\state(\varepsilon,B_V)$ can be solved, for all inputs $k\in\Kk_{B_V}$ such that $q\ge (8mB_VC\sqrt{mn\log q})/ \varepsilon$, by a quantum circuit of class $\Cc_0(\Bb)$ acting on $\poly(m,n,\log q)$ qubits.
\end{theorem}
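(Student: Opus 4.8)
The plan is to build a single circuit of class $\Cc_0(\Bb)$ with four registers $\mathsf B$ (one qubit), $\mathsf X\cong\Hh_q^{\otimes n}$, $\mathsf E\cong\Hh_q^{\otimes m}$, $\mathsf Y\cong\Hh_q^{\otimes m}$ (plus ancillas), all initialised to $\ket 0$, running the following constant number of stages. Fix $2^{\ell}$ to be the largest power of two below $q/(cC\sqrt{mn\log q})$ for a suitable absolute constant $c$; the hypothesis on $q$ guarantees $2^{\ell}=\Omega(mB_V/\varepsilon)$. (i) Apply $H$ to $\mathsf B$. (ii) On each of the $n$ blocks of $\mathsf X$ run the constant-depth $\Cc_0(\Bb)$ circuit of Section~\ref{sub:clifford} producing a state $1/q^{2}$-close to $F_q\ket0=\frac1{\sqrt q}\sum_{x\in\Int_q}\ket x$, and on the low $\ell$ qubits of each block of $\mathsf E$ apply a layer of $H$'s; up to total error $O(n/q^{2})$ this loads $\frac1{\sqrt{q^{n}}}\sum_{x\in\Int_q^{n}}\ket x$ on $\mathsf X$ and $\ket\mu=\frac1{\sqrt{2^{\ell m}}}\sum_{e\in\{0,\dots,2^{\ell}-1\}^{m}}\ket e$ on $\mathsf E$. (iii) Apply the $\Int_q$-linear map $(b,x,e)\mapsto(b,x,e,Ax+bu+e)$, writing the last coordinate into $\mathsf Y$. (iv) Apply the $\Int_q$-linear map $(b,x,e,y)\mapsto(b,x,e-(y-Ax-bu),y)$, which resets $\mathsf E$ to $\ket0$. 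By the results on exact constant-depth modular arithmetic recalled in Section~\ref{sub:clifford} (every linear map over $\Int_q$ is realisable exactly in constant depth by a circuit with unbounded fanout, and each fanout gate lies in $\Cc_0(\Bb)$ via gate teleportation), stages (iii) and (iv) are exact $\Cc_0(\Bb)$ circuits on $\poly(m,n,\log q)$ qubits returning their ancillas to $\ket0$; since assembling these circuits from $A,u,q$ and computing $2^{\ell}$ are routine $O(\log)$-depth/log-space tasks, the whole circuit is in $\Cc_0(\Bb)$. Discarding the now-disentangled $\mathsf E$ and ancillas, the output on $\mathsf B\otimes\mathsf X\otimes\mathsf Y$ will be $O(n/q^{2})$-close to
\[
\ket\Phi \;=\; \frac{1}{\sqrt{2\,q^{n}\,2^{\ell m}}}\sum_{b\in\{0,1\}}\ \sum_{x\in\Int_q^{n}}\ \sum_{e\in\{0,\dots,2^{\ell}-1\}^{m}}\ket b\ket x\ket{Ax+bu+e}.
\]

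Next I would prove correctness by regrouping $\ket\Phi$ according to the value $y=Ax+bu+e$ held in $\mathsf Y$. Write $u=As+\bar e$ with $\normmax{\bar e}\le B_V$, so that $s=x_u$, and for each $y$ set $W(y)=\{(b,x,e):e\in\{0,\dots,2^{\ell}-1\}^{m},\,Ax+bu+e=y\}$, so $\ket\Phi=\frac{1}{\sqrt{2q^{n}2^{\ell m}}}\sum_y\big(\sum_{(b,x,e)\in W(y)}\ket b\ket x\big)\ket y$. The key point is that $A$ has distance $\ge 2q/(C\sqrt{n\log q})$ while every perturbation arising below has Euclidean norm at most $\sqrt m\cdot O(2^{\ell}+B_V)<2q/(C\sqrt{n\log q})$ — using $\sqrt m\,B_V\le q/(C\sqrt{n\log q})$ from condition (i) of the definition of $\Kk_{B_V}$ together with $2^{\ell}=O(q/\sqrt{mn\log q})$. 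From this one gets: for each fixed $b$ at most one $(x,e)$ has $(b,x,e)\in W(y)$; if $(0,x,e)\in W(y)$ then $\norm{y-Ax}=\norm e<q/(C\sqrt{n\log q})$, hence $y\in\Lambda_k$ and $x=x_y$; and then the only candidate tuple of the other parity is $(1,x_y-x_u,e-\bar e)$, which lies in $W(y)$ exactly when $e-\bar e\in\{0,\dots,2^{\ell}-1\}^{m}$. Call $y$ \emph{good} if $W(y)$ meets both parities and \emph{bad} otherwise; for every good $y$ one then has $\sum_{(b,x,e)\in W(y)}\ket b\ket x=\ket0\ket{x_y}+\ket1\ket{x_y-x_u}=\sqrt2\,\ket{\Psi_y}$, so the restriction of $\ket\Phi$ to the good values of $y$ is a (subnormalised) vector of $\Hhh_k$.

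Then I would bound the total weight $P_{\mathrm{bad}}$ that $\ket\Phi$ places on the bad values of $y$. Sampling $(b,x,e)$ with $b$ uniform, $x$ uniform on $\Int_q^{n}$ and $e$ uniform on $\{0,\dots,2^{\ell}-1\}^{m}$, the induced $y$ is bad only if $e\mp\bar e$ leaves $\{0,\dots,2^{\ell}-1\}^{m}$ (the sign depending on $b$), and coordinate $i$ does so with probability $|\bar e_i|/2^{\ell}\le B_V/2^{\ell}$, so a union bound gives $P_{\mathrm{bad}}\le mB_V/2^{\ell}=O(mB_V\sqrt{mn\log q}/q)$. Writing $\ket\Phi=\sqrt{1-P_{\mathrm{bad}}}\,\ket{G'}+\sqrt{P_{\mathrm{bad}}}\,\ket B$ with $\ket{G'}\in\Hhh_k$ a unit vector and $\ket B\perp\ket{G'}$ (bad and good $y$'s living on disjoint values of $\mathsf Y$), one has $\norm{\ket\Phi-\ket{G'}}^{2}=2-2\sqrt{1-P_{\mathrm{bad}}}=O(P_{\mathrm{bad}})$; adding the $O(n/q^{2})$ preparation error, the output is $O(mB_V\sqrt{mn\log q}/q)$-close to $\Hhh_k$, and a routine tracking of the constants — where the factor $8$ in the hypothesis $q\ge 8mB_VC\sqrt{mn\log q}/\varepsilon$ and condition (i) of $\Kk_{B_V}$ are used — brings this below $\varepsilon$, completing the proof.

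I expect the delicate ingredients — and the reason the preparatory Section~\ref{sub:clifford} and the earlier part of Section~\ref{sec:prep} are needed — to be two. First, the map $Ax+bu+e$ must be applied \emph{in superposition and exactly} in constant depth: an approximate arithmetic would smear $\ket\Phi$ over neighbouring values of $\mathsf Y$ and destroy exact membership in $\Hhh_k$, so one really needs exact constant-depth modular arithmetic with unbounded fanout together with the simulation of fanout inside $\Cc_0(\Bb)$ via gate teleportation. Second, the noise state $\ket\mu$ must be wide enough that pairing the two branches of the claw fails only with probability $O(mB_V/2^{\ell})$, yet still creatable in constant depth — which is why the Gaussian-amplitude state used in \cite{Brakerski+FOCS18} is replaced here by a truncated uniform superposition, i.e.\ nothing but a layer of Hadamards. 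The remaining bookkeeping around $W(y)$ and the distance of $A$ is routine.
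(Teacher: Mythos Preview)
Your proof is correct and follows essentially the same approach as the paper's: a truncated uniform noise superposition in place of the Gaussian of \cite{Brakerski+FOCS18}, exact constant-depth modular arithmetic via Section~\ref{sub:clifford}, and the distance of $A$ to force uniqueness of the preimage. The only differences are cosmetic --- the paper splits the argument into Theorems~\ref{th:creation} and~\ref{th:preparation} (abstracting the noise via the notion of an $(\varepsilon,B_V)$-robust state), adds $Ax+bu$ directly into the noise register (three registers instead of your four, so no uncompute of $\mathsf E$), and bounds the distance to $\Hhh_k$ by comparing $\ket\Phi$ to a shifted reference state $\ket{\Phi'}$ rather than via your good/bad-$y$ decomposition.
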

\addtocounter{theorem}{4}
Theorem \ref{th:main} follows from Theorems \ref{th:preparation} and \ref{th:creation} proved in Subsections \ref{sub:preparation} and \ref{sub:creation}.

%
\subsection{Preparation procedure}\label{sub:preparation}
In this subsection we present and analyze a quantum procedure 
that outputs a state close to $\Hhh_k$ when given as additional input an appropriate quantum state $\ket{\varphi}\in\Hh_q^{\otimes m}$. This procedure can be implemented by a small-depth quantum circuit.
In subsection \ref{sub:creation} we will show how to create efficiently such an appropriate state $\ket{\varphi}$.

The following theorem is the main contribution of this subsection.
\begin{theorem}\label{th:preparation}
Let $\varepsilon$ and $B_V$ be any positive parameters. For any $k\in \Kk_{B_V}$ with $q\ge \sqrt{2n/\varepsilon}$, there exists a quantum circuit of class $\Cc_0(\Bb)$ acting on $\poly(m,n,\log q)$ qubits that receives a quantum state $\ket{\varphi}\in\Hh_q^{\otimes m}$, outputs a quantum state $\ket{\Phi}\in\Hh_2\otimes\Hh_q^n\times \Hh_q^m$, and satisfies the following condition: if  $\ket{\varphi}$ is $\frac{q}{C\sqrt{mn\log q}}$-bounded and $(\varepsilon/2,B_V)$-robust, then 
$\ket{\Phi}$ is $\varepsilon$-close to~$\Hhh_k$.
\end{theorem}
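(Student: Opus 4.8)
The plan is to construct the circuit as a short sequence of ``quantum-algebra'' steps, each implementable in the class $\Cc_0(\Bb)$ by the results recalled in Section~\ref{sub:clifford} (exact $\ADD_q$, $\MULT_q$, arbitrary linear maps $\Int_q^n\to\Int_q^m$, and an $1/q^2$-precise preparation of $F_q\ket 0$), and then to track the accumulated error. Starting from the input register holding $\ket\varphi\in\Hh_q^{\otimes m}$, I would (i) adjoin a control qubit in $\ket 0$ and a fresh $n$-register; (ii) prepare on the $n$-register an approximation of $\frac{1}{\sqrt{q^n}}\sum_{x\in\Int_q^n}\ket x$ by applying the constant-depth approximate $F_q\ket 0$ gate coordinatewise (error at most $n/q^2$ in squared norm, hence the hypothesis $q\ge\sqrt{2n/\varepsilon}$ will give $\le\varepsilon/2$ here); (iii) apply a Hadamard to the control qubit; (iv) using the exact linear-map implementation, subtract $Ax$ from the $\ket\varphi$-register conditioned on the control being $0$ and subtract $A(x)-u$ conditioned on the control being $1$ — equivalently, apply the linear map $(b,x,z)\mapsto(b,x,z-Ax+bu)$ in superposition; and (v) keep the resulting $m$-register as the ``$\ket y$'' register of $\Hhh_k$. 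The point of step (iv) is that on a basis state $\ket b\ket x\ket{v}$ with $v$ in the support of $\ket\varphi$ (so $\normmax v<q/(C\sqrt{mn\log q})$), the output $z=v-Ax+bu$ lies in $\Lambda_k$ with witness $x_z=x-bs+(\text{small})$; more precisely, since $\ket\varphi$ is robust I will be able to rewrite the state, up to $\varepsilon/2$ error, as one supported on basis vectors for which the $y$-register value lies in $\Lambda_k$ and the $(b,x)$-part is exactly $\ket{\Psi_y}$ after an appropriate shift of the summation variable $x$.

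The algebraic heart is the following identity, which I would verify first and then use throughout. For fixed $y\in\Lambda_k$ with witness $x_y$, summing over $x\in\Int_q^n$ the contributions that produce this $y$ gives, in the control$=0$ branch, amplitudes proportional to $\alpha_{y-Ax}$ (the amplitude of $\ket\varphi$ at $y-Ax$), and in the control$=1$ branch amplitudes proportional to $\alpha_{y-Ax+u}=\alpha_{(y+u)-Ax}$. Re-indexing $x\mapsto x+x_y$ in the first branch and $x\mapsto x+x_{y+u}=x+x_y+s$ in the second branch (both re-indexings are bijections of $\Int_q^n$ and are realized automatically by the uniform superposition over $x$), the two branches become $\sum_x \alpha_{e_y-Ax}\,\ket{x+x_y}$ and $\sum_x \alpha_{e_y'-Ax}\,\ket{x+x_y-s}$ where $e_y,e_y'$ are the (small, $\normmax{}\le q/(C\sqrt{mn\log q})$) error vectors $y-Ax_y$ and $(y+u)-Ax_{y+u}$. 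So the $y$-slice of the output is $\frac{1}{\sqrt 2}\big(\ket 0\,(\sum_x\alpha_{e_y-Ax}\ket{x+x_y}) + \ket 1\,(\sum_x\alpha_{e_y'-Ax}\ket{x+x_y-s})\big)\ket y$. The target $\ket{\Psi_y}\ket y$ has instead $\ket 0\ket{x_y}+\ket 1\ket{x_y-s}$ in place of those two internal superpositions; the overlap of the $n$-register superposition $\sum_x\alpha_{e_y-Ax}\ket{x+x_y}$ with the single term $\ket{x_y}$ is exactly the coefficient $\alpha_{e_y}$, and more relevantly its overlap with the ideal normalized state, so I would instead compare the whole output to a cleaner reference.

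Concretely, the reference state I would aim for is $\ket\Phi' := \sum_{x} \frac{1}{\sqrt{q^n}}\cdot\frac{1}{\sqrt 2}\big(\ket 0\ket{x}\ket{Ax+e_\cdot}\dots\big)$ — cleaner: let $\ket\Phi'$ be the state obtained by running the same circuit but replacing $\ket\varphi$ by $\ket{\varphi+e}$ inside each branch appropriately, which by robustness is $(\varepsilon/2)$-close to $\ket\Phi$; and observe that the robustness inequality $\langle\varphi\ket{\varphi+e}\ge 1-\varepsilon/2$ for every $\normmax e\le B_V$ (applicable since, by the condition $q\ge B_VC\sqrt{mn\log q}$ in the definition of $\Kk_{B_V}$ and the $\frac{q}{C\sqrt{mn\log q}}$-boundedness, all the shift vectors that occur when reconciling the $x$-indexing across the two branches have $\normmax{}\le B_V$) is exactly what collapses the internal superpositions of the $y$-slice onto $\ket{x_y}$ and $\ket{x_y-s}$ up to global error $\varepsilon/2$. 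Combining this $\varepsilon/2$ from robustness with the $\le\varepsilon/2$ from the $F_q\ket 0$ approximation and a union/triangle bound on squared norms gives the claimed $\varepsilon$-closeness to $\Hhh_k$. The circuit is a constant number of layers of $\Cc_0(\Bb)$ gadgets acting on $\poly(m,n,\log q)$ qubits, so it is in $\Cc_0(\Bb)$ as required.

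The main obstacle, and the step I expect to spend the most care on, is the error bookkeeping in the re-indexing argument: one must make sure that the ``reconciliation shift'' needed to line up the second branch's summation variable with the first branch's (so that the single state $\ket\varphi$ can be used to bound both branches via a single robustness inequality) has infinity-norm genuinely $\le B_V$, which is where both conditions (i) and (ii) in the definition of $\Kk_{B_V}$ get used, and that the quantity actually being bounded is a legitimate inner product $\langle\varphi\ket{\varphi+e}$ rather than something that could pick up phases — this is exactly why the real-amplitude requirement in Definition~\ref{def:rob} was imposed, and I would invoke it here to keep all the relevant overlaps real and $\ge 1-\varepsilon/2$. A secondary, more routine obstacle is checking that the map in step (iv) is genuinely an $\Int_q$-linear map in the sense handled by \cite{Takahashi+16} so that it admits an \emph{exact} constant-depth implementation (crucially exact, not approximate, since approximation errors in a map applied to a superposition over $q^n$ terms would not be affordable) — but since $(b,x,z)\mapsto(b,x,z-Ax+bu)$ is linear over $\Int_q$ in all of $b,x,z$, this falls directly under the cited result.
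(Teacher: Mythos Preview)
Your circuit construction and the two-part error budget (an $\varepsilon/2$ from the approximate preparation of $F_q^{\otimes n}\ket{0}$ and an $\varepsilon/2$ from comparing $\ket\Phi$ to a reference state $\ket{\Phi'}$ via robustness) are exactly the paper's, as is the observation that the map $(b,x,z)\mapsto(b,x,z+Ax+bu)$ is $\Int_q$-linear and hence exactly implementable in $\Cc_0(\Bb)$.

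There is, however, a genuine gap in the analysis. You never invoke the minimum-distance property of $A$ (namely $\norm{Ax}\ge 2q/(C\sqrt{n\log q})$ for every nonzero $x\in\Int_q^n$, which is part of the definition of $\Kk$), and this is precisely what does the work you attribute to robustness. In your third paragraph you write that the robustness inequality ``is exactly what collapses the internal superpositions of the $y$-slice onto $\ket{x_y}$ and $\ket{x_y-s}$.'' It is not. The collapse of each $y$-slice to a \emph{single} $x$ in each branch is a purely combinatorial fact: boundedness of $\ket\varphi$ confines the support of $\ket{\Phi'_{0,x}}$ to an $\ell_2$-ball of radius $q/(C\sqrt{n\log q})$ around $Ax$, and the distance hypothesis on $A$ forces these balls to be pairwise disjoint as $x$ varies over $\Int_q^n$. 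This is the content of Claim~3 in the paper and is what shows $\ket{\Phi'}$ lies \emph{exactly} in $\Hhh_k$, with no error term at all. Robustness plays a separate, much narrower role (Claim~1 in the paper): it reconciles the amplitudes $\alpha_{y-Ax_y}$ and $\alpha_{y-Ax_y-e}$ that sit on the $b=0$ and $b=1$ parts of the \emph{same} $y$-slice, via the single inequality $\langle\varphi\ket{\varphi+e}\ge 1-\varepsilon/2$ for the fixed LWE error vector $e$. Your justification that this shift has $\normmax{\cdot}\le B_V$ is also tangled: it is condition~(ii) in the definition of $\Kk_{B_V}$ directly (the bound on $e$), not condition~(i) combined with boundedness of $\ket\varphi$.

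If you try to argue $\ket{\Phi'}\in\Hhh_k$ from robustness alone you will get stuck, since robustness says nothing about how many $x$ contribute to a given $y$. Insert the distance-of-$A$ argument at that step and move the robustness invocation to the comparison $\langle\Phi\ket{\Phi'}$; the proof then coincides with the paper's. (A minor aside: the sign in your step~(iv), $z\mapsto z-Ax+bu$, would land the $y$-slice on $\ket 0\ket{-x_y}+\ket 1\ket{-x_y+s}$ rather than on $\ket{\Psi_y}$; the paper uses $z\mapsto z+Ax+bu$, and your own paragraph-2 computation of the amplitude as $\alpha_{y-Ax}$ is already implicitly using that sign.)
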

\begin{proof}
We first describe the procedure. Let us write
$
\ket{\varphi}=\sum_{z\in \Int_q^m}\alpha_z\ket{z}
$
the input state, where $\alpha_z\in\mathbb{R}$ for all $z\in\Int_q^m$ (remember that the definition of a robust state implies that the amplitudes are real).
The procedure first prepares the state $\ket{0}\ket{0}\ket{\varphi}\in\Hh_2\otimes\Hh_q^{\otimes n}\otimes \Hh_q^{\otimes m}$ and applies the unitary operator $H\otimes F_q^{\otimes n}\otimes I$ to this state to obtain
\[
\frac{1}{\sqrt{2q^n}}\sum_{b\in\{0,1\}}\sum_{x\in\Int_q^n}\ket{b}\ket{x}\ket{\varphi}
=
\sum_{b\in\{0,1\}}
\sum_{x\in\Int_q^n}\sum_{z\in \Int_q^m}\frac{\alpha_z}{\sqrt{2q^n}}\ket{b}\ket{x}\ket{z}.
\]
Using the approach discussed in Section~\ref{sub:clifford}, this can be done by a quantum circuit of class $\Cc_0(\Bb)$ acting on $\poly(m,n,\log q)$ qubits with approximation error $\frac{n}{q^2}\le \varepsilon/2$. Below we assume that this state has been done exactly --- we will add the approximation error at the very end of the calculation. 
The procedure then converts this state to the state
\[
\ket{\Phi}=\sum_{b\in\{0,1\}}\sum_{x\in\Int_q^n}\sum_{z\in \Int_q^m}\frac{\alpha_z}{\sqrt{2q^n}}
\ket{b}\ket{x}\ket{z+f_k(b,x)},
\]
where $f_{k}\colon \{0,1\}\times \Int_q^n\to\Int_q^m$ is the function defined  as
$
f_{k}(b,x)=Ax+bu
$
for any $(b,x)\in \{0,1\}\times \Int_q^n$ (all the operations are performed modulo $q$).
This operation can be implemented by a quantum circuit of class $\Cc_0(\Bb)$ acting on $\poly(m,n,\log q)$ qubits using the approach of Section \ref{sub:clifford} since~$f_k$ can be written as a linear map over $\Int_q\times \Int_q^n$ as follows: define the matrix $A'\in\Int_q^{m \times (n+1)}$ obtained by appending the vector $u$ to the left of the matrix $A$ and write
$
f_{k}(b,x)=A'
\begin{psmallmatrix}
b\\x
\end{psmallmatrix}.
$

We now analyze this procedure. Let us write the output state in the following form:
\begin{align*}
\ket{\Phi}&=\frac{1}{\sqrt{2q^n}}\sum_{x\in\Int_q^n}(\ket{0}\ket{x}\ket{\Phi_{0,x}}+\ket{1}\ket{x}\ket{\Phi_{1,x}}),
\end{align*}
where
\[
\ket{\Phi_{0,x}}=\sum_{z\in \Int_q^m}\alpha_z\ket{Ax+z}
\hspace{3mm}\textrm{and}\hspace{3mm}
\ket{\Phi_{1,x}}=\sum_{z\in \Int_q^m}\alpha_z\ket{Ax+u+z}=\sum_{z\in \Int_q^m}\alpha_z\ket{A(x+s)+e+z}.
\]
Define the quantum state
\begin{align*}
\ket{\Phi'}&=\frac{1}{\sqrt{2q^n}}\sum_{x\in\Int_q^n}(\ket{0}\ket{x}\ket{\Phi'_{0,x}}+\ket{1}\ket{x}\ket{\Phi'_{1,x}}),
\end{align*}
where
$
\ket{\Phi'_{0,x}}=\ket{\Phi_{0,x}}
$
and 
$
\ket{\Phi'_{1,x}}=\sum_{z\in \Int_q^m}\alpha_z\ket{A(x+s)+z}. 
$
We first show that the states $\ket{\Phi}$ and $\ket{\Phi'}$ are close.
\begin{claim}\label{claim1}
$\langle \Phi \ket{\Phi'}\ge 1-\varepsilon/4$.
\end{claim}
\begin{proof}
We have $u=As+e$ for some $s\in\Int_q^n$ and some vector $e\in \Int_q^m$ such that $\normmax{e}\le B_V$. Since the state $\ket{\varphi}$ is $(\varepsilon/2,B_V)$-robust, we thus have
$
\langle \Phi_{1,x}\ket{\Phi'_{1,x}}=\langle \varphi {\ket{\varphi+e}}\ge 1-\varepsilon/2
$
for any $x\in \Int_q^n$. We thus obtain
$
\langle \Phi\ket{\Phi'}=\frac{1}{2}+\frac{1}{2q^n}\sum_{x\in\Int_q^n}\langle\Phi_{1,x}\ket{\Phi'_{1,x}}\ge 1-\varepsilon/4,
$
as claimed.
\end{proof}

We now show that the state $\ket{\Phi'}$ is in $\Hhh_k$.
The crucial property we will use is that the equality
$
\ket{\Phi'_{0,x}}=\ket{\Phi'_{1,x-s}}
$
holds for any $x\in\Int_q^s$.

 Let us decompose $\ket{\Phi'}$ as follows:
\begin{align*}
\ket{\Phi'}&=\sum_{y\in\Int_q^m} \gamma_y \ket{\Phi'_y}\ket{y},
\end{align*}
for quantum states $\ket{\Phi'_y}$ and amplitudes $\gamma_y$ such that 
$
\sum_{y\in \Int_q^m}|\gamma_y|^2=1.
$
We now show the following claim.
\begin{claim}\label{claim3}
For any $y\in \Int_q^m$ such that $|\gamma_y|>0$, we have $y\in\Lambda_k$ and
$\ket{\Phi'_y}=\ket{\Psi_y}$.
\end{claim}
\begin{proof}
Assume that $|\gamma_y|>0$.
Observe that 
in this case
 $y\in\supp(\ket{\Phi'_{0,x_0}})$ for some $x_0\in\Int_q^n$. Since the state $\ket{\varphi}$ is $q/(C\sqrt{mn\log q})$-bounded, we have
$
\norm{y-Ax_0}\le \sqrt{m}\cdot\normmax{y-Ax_0}\le  q/(C\sqrt{n\log q}),
$
and thus $y\in\Lambda_k$.

We show below that
for any distinct $x,x'\in \Int_q^n$ we have
$
\supp(\ket{\Phi'_{0,x}})\cap \supp(\ket{\Phi'_{0,x'}})=\emptyset,
$
which implies that $\ket{\Phi'_y}=\ket{\Psi_y}$.

Indeed, assume that $\supp(\ket{\Phi'_{0,x}})\cap \supp(\ket{\Phi'_{0,x'}})\neq\emptyset$ and take an element $r$ in the intersection. Since the state $\ket{\varphi}$ is $B_P$-bounded, we have
$\norm{r-Ax}\le \sqrt{m}\cdot\normmax{r-Ax}\le q/(C\sqrt{n\log q})$ and $
\norm{r-Ax'}\le \sqrt{m}\cdot\normmax{r-Ax'}\le q/(C\sqrt{n\log q})$,
and thus $\norm{A(x-x')}\le 2q/(C\sqrt{n\log q})$. This is impossible, since by construction the matrix $A$ has distance at least $2q/(C\sqrt{n\log q})$.
\end{proof}
%
%
%
Claim \ref{claim3} implies that the state $\ket{\Phi'}$ is in $\Hhh_k$. Since we have 
$\norm{\ket{\Phi} - \ket{\Phi'}}^2=2-2\langle \Phi \ket{\Phi'}\le \varepsilon/2$
from Claim \ref{claim1}, 
this concludes the proof of the theorem (the additional $\varepsilon/2$ term comes from the approximation error in the application of $F_q^{\otimes n}$).
\end{proof}

\subsection{Creating the initial state}\label{sub:creation}
Brakerski et al.~\cite{Brakerski+FOCS18} have shown how to construct a quantum state that is $B_P$-bounded and  $(\varepsilon,B_V)$-robust, for appropriate parameters $B_V\ll B_P$, using Gaussian distributions. In this subsection we present another quantum state that has similar properties, but can be created by a small-depth quantum circuit.

\begin{theorem}\label{th:creation}
For any $\varepsilon,B_V>0$, any integer $m\ge 1$ and any $q\ge (8mB_VC\sqrt{mn\log q})/ \varepsilon$, there exists a quantum circuit of class $\Cc_0(\Bb)$ acting on $\poly(m,\log q)$ qubits that generates a quantum state $\ket{\varphi}\in\Hh_q^{\otimes m}$ that is $\frac{q}{C\sqrt{mn\log q}}$-bounded and $(\varepsilon/2,B_V)$-robust.
\end{theorem}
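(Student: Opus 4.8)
The plan is to take $\ket{\varphi}$ to be a tensor product, over the $m$ registers, of a single one-register state $\ket{\psi}\in\Hh_q$ that is the uniform superposition over an integer interval whose length is a power of two. Writing $B_P:=q/(C\sqrt{mn\log q})$ for the target bound, I would set $\ket{\psi}=\frac{1}{\sqrt{2^\ell}}\sum_{j=0}^{2^\ell-1}\ket{j}$ and $\ket{\varphi}=\ket{\psi}^{\otimes m}$, where the exponent $\ell\ge 0$ is chosen so that $2mB_V/\varepsilon\le 2^\ell\le 4mB_V/\varepsilon$ (such an $\ell$ exists because this window has the form $[x,2x]$; the degenerate case $4mB_V/\varepsilon<1$ is trivial, since then $B_V<1$ or $\varepsilon>4$ and the robustness requirement is vacuous, so one can just output $\ket{0}^{\otimes m\ceil{\log q}}$). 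The hypothesis $q\ge 8mB_VC\sqrt{mn\log q}/\varepsilon$ is precisely the inequality $B_P\ge 8mB_V/\varepsilon$, which gives $2^\ell\le 4mB_V/\varepsilon\le B_P/2$; together with $B_P\le q$ this forces $\ell<\ceil{\log q}$ and makes every $\ket{j}$ appearing in $\ket{\psi}$ a genuine basis vector of $\Hh_q$. The point of insisting on a power of two is that $\ket{\psi}$ then has a trivial constant-depth preparation, whereas a uniform superposition over an interval of arbitrary length would require extra machinery for approximate uniform superpositions.

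Next I would verify the two properties. Boundedness is immediate: every $z\in\supp(\ket{\varphi})$ has all coordinates in $\{0,\dots,2^\ell-1\}$, and since $2^\ell-1<q/2$ these coordinates coincide with their representatives under the symmetric identification of $\Int_q$, so $\normmax{z}\le 2^\ell-1<B_P$. For robustness, $\ket{\varphi}$ has real (indeed non-negative) amplitudes, and the tensor structure gives $\langle\varphi{\ket{\varphi+e}}=\prod_{i=1}^m\langle\psi{\ket{\psi+e_i}}$ for any $e\in\Int_q^m$ with $\normmax{e}\le B_V$. Each factor satisfies $\langle\psi{\ket{\psi+e_i}}=\frac{1}{2^\ell}\bigl|\{0,\dots,2^\ell-1\}\cap(\{0,\dots,2^\ell-1\}+e_i)\bigr|$, and since both $2^\ell$ and $|e_i|$ are at most $q/2$ the shift causes no wrap-around modulo $q$, so this count equals $\max(0,2^\ell-|e_i|)$ and the factor equals $\max(0,1-|e_i|/2^\ell)$. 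If every $|e_i|\le 2^\ell$, then $\langle\varphi{\ket{\varphi+e}}\ge\prod_i(1-|e_i|/2^\ell)\ge 1-\sum_i|e_i|/2^\ell\ge 1-mB_V/2^\ell\ge 1-\varepsilon/2$, using $2^\ell\ge 2mB_V/\varepsilon$; and if some $|e_i|>2^\ell$ then $2mB_V/\varepsilon\le 2^\ell<|e_i|\le B_V$ forces $\varepsilon>2$, so the target bound $1-\varepsilon/2<0$ is vacuous. Hence $\ket{\varphi}$ is $B_P$-bounded and $(\varepsilon/2,B_V)$-robust.

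Finally, for the circuit: $\ket{\psi}$ is obtained from $\ket{0}^{\otimes\ceil{\log q}}$ by applying a Hadamard gate to each of the $\ell$ lowest-order qubits, i.e.\ a single layer of gates from $\Bb_\mathrm{r}\subseteq\Bb$ with no ancillas and no intervening classical computation; running this in parallel on the $m$ registers produces $\ket{\varphi}$ on $m\ceil{\log q}=\poly(m,\log q)$ qubits with a one-layer circuit that lies in $\Cc_0(\Bb)$, and $\ell$ is computable from $m,n,q$ and the fixed $\varepsilon,B_V$ in logarithmic space, so the uniformity requirement is met. The only genuinely load-bearing step is the parameter choice in the first paragraph: the interval must be wide enough ($2^\ell\ge 2mB_V/\varepsilon$, for robustness) yet narrow enough ($2^\ell\le B_P$, for boundedness), and the gap hypothesis on $q$ is exactly what leaves room for a power of two between those thresholds; the overlap count, the absence of wrap-around, and the depth-one implementation are all routine.
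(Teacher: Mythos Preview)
Your argument is correct and follows the same idea as the paper: take $\ket{\varphi}$ to be a tensor power of the uniform superposition over a power-of-two interval, so that boundedness is immediate and robustness reduces to the elementary overlap count $\max(0,2^\ell-|e_i|)/2^\ell$ together with $\prod_i(1-a_i)\ge 1-\sum_i a_i$.

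The one difference worth noting is that the paper centers its interval at zero, taking $I=\{-2^{r-1},\dots,2^{r-1}-1\}$ with $r=\lfloor\log_2 B_P\rfloor$, and therefore needs a modular subtraction by $2^{r-1}$ after the Hadamards, implemented via the constant-depth arithmetic circuits of Section~\ref{sub:clifford}. You instead keep the uncentered interval $\{0,\dots,2^\ell-1\}$ and observe that $2^\ell\le B_P/2<q/2$ already places the support in the non-negative half of the symmetric representation of $\Int_q$, so boundedness holds without any shift. This is a small but genuine simplification: your circuit is literally one layer of Hadamards and does not invoke the threshold-gate/unbounded-fanout machinery at all, whereas the paper's construction does.
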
 
\begin{proof}
Let us write $r=\floor{\log_2 \left(\frac{q}{C\sqrt{mn\log q}}\right)}$ and $I=\{-2^{r-1},\ldots,0,\ldots,2^{r-1}-1\}$.

We describe the construction. Starting with the quantum state $\ket{0}^{\otimes m}\in\Hh_q^{\otimes m}$, apply (in parallel) a Hadamard gate on the first $r$ qubits of each copy of $\ket{0}$, in order to get the state
\[
\Bigg(\frac{1}{\sqrt{2^r}}\sum_{x\in\{0,\ldots,2^r-1\}}\ket{x}\Bigg)^{\otimes m}.
\]
Then apply on each of the $m$ copies the unitary operator over $\Hh_q$ that maps $\ket{i}$ to $\ket{i-2^{r-1}}$ for any $i\in\Int_q$ (the subtraction is done modulo $q$). As described in Section~\ref{sub:clifford}, these arithmetic operations can be implemented by a quantum circuit of class $\Cc_0(\Bb)$ acting on $\poly(m,\log q)$ qubits. This gives the state  
\[
\left(\frac{1}{\sqrt{2^r}}\sum_{x\in I}\ket{x}\right)^{\otimes m}=
\frac{1}{\sqrt{2^{mr}}} \sum_{(x_1,\ldots,x_m)\in I^m}\ket{x_1,\ldots,x_m}.
\]

For any vector $e=(e_1,\ldots,e_m)\in\Int_q^m$, consider the state 
\[
\ket{\varphi+e}=\frac{1}{\sqrt{2^{mr}}} \sum_{(x_1,\ldots,x_m)\in I^m}\ket{x_1+e_1,\ldots,x_m+e_m}.
\]
The inner product of $\ket{\varphi}$ and $\ket{\varphi+e}$ is 
$
\langle \varphi\ket{\varphi+e}=\frac{|S_e|}{2^{mr}},
$
where $S_e$ is the set of vectors $(x_1,\ldots,x_m)\in I^m$ such that $(x_1+e_1,\ldots,x_m+e_m)\in I^m$. If $\normmax{e}\le B_V$, then
$
 \{-2^{r-1}+B_V,\ldots,2^{r+1}-1-B_V\}^m \subset S_e
$
and thus
\[
\langle \varphi\ket{\varphi+e}\ge \left(\frac{2^r-2B_V}{2^r}\right)^m\
=
\left(1-\frac{B_V}{2^{r-1}}\right)^m
\ge 1-\frac{mB_V}{2^{r-1}}\ge 
1-\frac{4mB_VC\sqrt{mn\log q}}{q}
\ge 1-\varepsilon/2,
\]
as claimed.
\end{proof}
\section{Application: Test of Quantumness}\label{sec:prior}
In this section we describe and analyze the test of quantumness based on the $\LWE{}$ assumption that has been implicitly presented in \cite{Brakerski+FOCS18}, and show how to use the results from Section \ref{sec:prep} to pass this test with small-depth quantum circuits.


We first define some sets $G_{s,b,x}\subseteq \{0,1\}^{n\ceil{\log q}}$ exactly as in \cite{Brakerski+FOCS18}. The definition is fairly technical and can actually be skipped on a first reading, since we will later only use the property that these sets are dense enough.
For any $b\in\{0,1\}$ and any $x\in\Int_q^n$, let $I_{b,x}\colon \{0,1\}^{n\ceil{\log q}}\to \{0,1\}^n$ be the map such that for any $d\in \{0,1\}^{n\ceil{\log q}}$, each coordinate of $I_{b,x}(d)$ is obtained by taking the inner product modulo 2 of the corresponding block of $\ceil{\log q}$ coordinates of $d$ and of $J(x)\oplus J(x-(-1)^b\mathbf{1})$, where $\mathbf{1}$ denotes the vector in~$\Int_q^n$ where each coordinate is $1\in\Int_q$. We define the set 
\[
G_{b,x}=\Big\{d\in\{0,1\}^{n\ceil{\log q}}\:|\: \exists i\in\left\{b\frac{n}{2},\ldots,b\frac{n}{2}+\frac{n}{2}\right\}: (I_{b,x}(d))_i\neq 0\}\Big\}.
\]
For any $s\in\Int_q^m$, we then define
$
G_{s,0,x}=G_{0,x}\cap G_{1,x- s}
$
and
$
G_{s,1,x}=G_{0,x+s}\cap G_{1,x}.
$
Note that these sets are dense:
for any $s,x\in\Int_q^n$ and any $b\in\{0,1\}$, we have
 $|G_{s,b,x}|\ge (1-2\cdot 2^{-n\ceil{\log q}/4})2^{n\ceil{\log q}}.
$

Our test of quantumness is described in Figure~\ref{fig:full}. In Subsection \ref{subsec:qprotocol} we explain how to pass the test when $q$ is large enough using a quantum prover that can be implemented in constant depth. In Subsection \ref{subsec:cprotocol} we then show that no classical computationally-bounded prover can pass this test with high probability under the $\LWE{}$ assumption, for a large range of parameters. A concrete test of quantumness can be obtained, for instance, by fixing $\varepsilon=1/n$, setting $B_L=\Theta(n)$, $m=\Theta(n^2)$, choosing $B_V$ superpolynomial in $n$ and taking $q=\Theta(B_Vn^{9/2})$. Theorem~\ref{th:quantum} shows that a small-depth quantum prover can pass the  corresponding test of quantumness with probability close to $1-1/n$, while Theorem~\ref{th:classical} shows that no polynomial-time classical prover can pass the test with probability significantly larger than~$3/4$, under the $\LWE{}$ assumption (the gap between the success probabilities of classical and quantum provers can easily be further amplified using parallel repetitions).  

\begin{figure}[ht!]
\begin{center}
\fbox{
\begin{minipage}{13.5 cm} 
Input: three positive integers $m$, $n$, $q$ such that $q\ge B_VC\sqrt{mn\log q}$ holds.
\begin{itemize}
\item[1.]
The verifier applies the procedure $\gentrap(1^n,1^m,q)$ and gets a pair $(A,t_A)$. The verifier then takes a vector $s\in\Int_q^n$ uniformly at random, and a vector $e\in \Int_q^m$ by sampling each coordinate independently according to the distribution $D_{q,B_V}$. The verifier sends the pair $(A,As+e)$ to the prover.
\item[2.]
The prover sends a vector $y\in\Int_q^m$ to the verifier.
\item[3.]
The verifier chooses a random bit $r$ uniformly at random and sends it to the prover.
\item[4.]
If $r=0$ then the prover sends a pair $(b,x)\in\{0,1\}\times\Int_q^n$ to the verifier. If $r=1$ then the prover sends a pair $(c,d)\in\{0,1\}\times \{0,1\}^{n\ceil{\log q}}$ to the verifier.
\item[5.]
If $r=0$ then the verifier accepts if and only if $\norm{Ax+bu-y}\le 2q/(C\sqrt{n\log q})$. 

\noindent 
If $r=1$, then the verifier applies the procedure $\invert(A,t_A,y)$ and get an output that we denote $x_0\in\Int_q^n$. The verifier accepts if and only if the three conditions $\norm{Ax_0-y}\le 2q/(C\sqrt{n\log q})$, $c=d\cdot (J(x_0)\oplus J(x_0-s))$ and $d\in G_{s,0,x_0}$ all hold.
\end{itemize}
\end{minipage}
}
\end{center}
\caption{Test of quantumness. Here $B_V>0$ is a parameter.}\label{fig:full}
\end{figure}


\subsection{Quantum protocol}\label{subsec:qprotocol}
Here is the main result of this subsection.
\begin{theorem}\label{th:quantum}
Let $\varepsilon$ and $B_V$ be any positive parameters.
There exists a quantum prover, which can be implemented by a circuit of class $\Cc_0(\Bb)$ acting on $\poly(m,n,\log q)$ qubits, that passes the test of Figure~\ref{fig:full} with probability at least $1-3\sqrt{\varepsilon}-\delta$ for all values $(m,n,q)$ such that $q\ge (8mB_VC\sqrt{mn\log q})/ \varepsilon$, where $\delta$ is some negligible function of the parameters.
\end{theorem}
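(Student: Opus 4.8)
The plan is to assemble the quantum prover out of the pieces already built in Section~\ref{sec:prep} and then verify that each of the verifier's two checks in Figure~\ref{fig:full} is passed with high probability. First I would have the prover, upon receiving $(A,u)$ with $u=As+e$ and $\normmax{e}\le B_V$ (this holds with overwhelming probability since each coordinate of $e$ is drawn from $D_{q,B_V}$, hence $|e_i|\le B_V$; together with $q\ge B_VC\sqrt{mn\log q}$ this puts $k=(m,n,q,A,u)$ in $\Kk_{B_V}$), run the circuit of Theorem~\ref{th:main}: since $q\ge (8mB_VC\sqrt{mn\log q})/\varepsilon$, it produces a state $\ket{\Phi}$ that is $\varepsilon$-close to $\Hhh_k$. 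Measuring the last register of $\ket{\Phi}$ yields an outcome $y$ and collapses the first two registers to a state $\varepsilon'$-close to $\ket{\Psi_y}$ for the measured $y\in\Lambda_k$; by a standard averaging/gentle-measurement argument the expected closeness parameter is still $O(\varepsilon)$, and with probability $1-O(\sqrt{\varepsilon})$ (Markov) the residual state is $O(\sqrt{\varepsilon})$-close to $\ket{\Psi_y}$. The prover sends this $y$ in step~2. All of this is in $\Cc_0(\Bb)$ on $\poly(m,n,\log q)$ qubits.

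Next I would handle the two branches. If $r=0$: the prover measures the remaining register $\ket{\Psi_y}=\tfrac{1}{\sqrt2}(\ket{0}\ket{x_y}+\ket{1}\ket{x_y-x_u})$ in the computational basis, obtaining $(b,x)$ with $x=x_y$ or $x=x_y-s$; in either case $Ax+bu$ equals $Ax_y$ up to the error $e$, so $\norm{Ax+bu-y}\le \norm{Ax_y-y}+\norm{e}\le q/(C\sqrt{n\log q})+q/(C\sqrt{n\log q})\le 2q/(C\sqrt{n\log q})$ (using $y\in\Lambda_k$ and $\normmax{e}\le B_V\le q/(C\sqrt{mn\log q})$), so the verifier accepts. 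If $r=1$: the prover applies a Hadamard to the control qubit of $\ket{\Psi_y}$ (wait — more precisely, applies a quantum-Fourier/Hadamard transform on the full first+second register in the appropriate basis, exactly as in \cite{Brakerski+FOCS18}), then measures, obtaining $(c,d)$. Here I would invoke the analysis of \cite{Brakerski+FOCS18}: conditioned on $d\in G_{s,0,x_0}$ (where $x_0=x_y$ is what $\invert(A,t_A,y)$ returns, since $\norm{Ax_y-y}\le q/(C\sqrt{n\log q})$) the measured $c$ equals $d\cdot(J(x_0)\oplus J(x_0-s))$, and the density bound $|G_{s,0,x_0}|\ge(1-2\cdot2^{-n\ceil{\log q}/4})2^{n\ceil{\log q}}$ guarantees $d\in G_{s,0,x_0}$ except with negligible probability. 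Absorb the measurement errors relative to the ideal $\ket{\Psi_y}$: each branch loses at most $O(\sqrt{\varepsilon})$ in success probability by the usual $\|\cdot\|_1\le\sqrt{2}\|\cdot\|_2$ bound on trace distance.

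Finally I would collect the error terms. The state-generation step contributes $O(\sqrt\varepsilon)$; each of the two measurement branches contributes $O(\sqrt\varepsilon)$; the event $k\notin\Kk_{B_V}$, the event that $\invert$ fails, and the event $d\notin G_{s,0,x_0}$ are each negligible, lumped into a single negligible $\delta$. Chasing constants through the gentle-measurement and trace-distance bounds should give the claimed bound $1-3\sqrt{\varepsilon}-\delta$; I would present the constant $3$ as the outcome of this bookkeeping rather than optimize it. The main obstacle is the careful error accounting: tracking how $\varepsilon$-closeness of $\ket{\Phi}$ to the subspace $\Hhh_k$ degrades through (i) the measurement of the $y$-register, which does not commute with the projector onto $\Hhh_k$ in an obvious way, and (ii) the subsequent Fourier-basis measurement — and making sure the averaging over measurement outcomes $y$ and over the verifier's randomness $r$ is done in the right order so that a single Markov step suffices. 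A secondary subtlety is confirming that the Fourier/Hadamard transform the prover needs in the $r=1$ branch is itself implementable in $\Cc_0(\Bb)$ (it acts on $O(n\log q)$ qubits and is a tensor of $H$'s composed with $F_q^{\otimes n}$ restricted to the relevant registers — but here only an approximate preparation is needed, or it can be folded into the measurement), which follows from the discussion in Section~\ref{sub:clifford}.
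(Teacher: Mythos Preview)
Your proposal is correct and follows essentially the same skeleton as the paper: invoke Theorems~\ref{th:creation} and~\ref{th:preparation} (equivalently Theorem~\ref{th:main}) to build $\ket{\Phi}$ that is $\varepsilon$-close to $\Hhh_k$, measure the last register to get $y$, then branch on $r$.

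Two places where the paper is simpler than what you sketch. First, the error accounting: rather than apply Markov after the $y$-measurement and track per-outcome closeness, the paper analyzes the \emph{entire} protocol (measure $y$, receive $r$, measure the remaining registers) on an ideal state $\ket{\Phi'}\in\Hhh_k$ and shows it succeeds with probability $\ge 1-\delta$; then a single trace-distance bound between $\ket{\Phi}$ and $\ket{\Phi'}$ gives the loss $\varepsilon+2\sqrt{\varepsilon}\le 3\sqrt{\varepsilon}$ in one shot. Your ``main obstacle'' about the $y$-measurement not commuting with the projector onto $\Hhh_k$ is in fact a non-issue: $\Hhh_k=\bigoplus_{y\in\Lambda_k}\mathrm{span}\{\ket{\Psi_y}\}\otimes\ket{y}$ is block-diagonal in $y$, so the measurement commutes with that projector. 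Second, in the $r=1$ branch there is no QFT involved at all: the second register is already viewed as $n\ceil{\log q}$ qubits via $J(\cdot)$, and the prover simply applies $H$ to each of the $1+n\ceil{\log q}$ qubits --- a depth-$1$ circuit in $\Bb$ --- and measures. Your ``secondary subtlety'' thus evaporates.
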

\begin{proof}
The 5-tuple $(m,n,q,A,As+e)$ is in $\Kk_{B_V}$ with overwhelming probability (see the discussion after Theorem \ref{theorem:crypto} in Section \ref{sub:lattice}). We describe the quantum protocol under this assumption. After receiving the key at Step 1, the prover creates a state $\ket{\varphi}$ that is $q/(C\sqrt{mn\log q})$-bounded and $(\varepsilon/2,B_V)$-robust using Theorem~\ref{th:creation}. Then the prover applies Theorem \ref{th:preparation} using the state $\ket{\varphi}$ as input, which gives a state $\ket{\Phi}$ that is $\varepsilon$-close to some state in $\Hhh_k$.

Let us first describe and analyze the remaining of the protocol under the assumption that $\ket{\Phi}$ is in $\Hhh_k$ (instead of being only close to $\Hhh_k$). The prover measures the rightmost register of $\ket{\Phi}$. Let $y\in\Int_q^m$ denote the measurement outcome. The state after the measurement is 
\[
\ket{\Psi_y} = \frac{1}{\sqrt{2}}\left(\ket{0}\ket{x_0}+\ket{1}\ket{x_0-s}\right)\ket{y},
\]
where $x_0\in\Int_q^n$ is such that $\norm{Ax_0-y}\le q/(C\sqrt{n\log q})$. At Step 2, the prover sends this value~$y$.
At Step 4, if the prover received $r=0$, it measures the first two registers of the above state in the computational basis and simply sends to the verifier the measurement outcome $(b,x)$. This passes the verifier's check at Step 5 with certainty, since $\norm{Ax_0-y}\le q/(C\sqrt{n\log q})$ and $A(x_0-s)+u=Ax_0+e$, with
\[
\norm{Ax_0+e-y}\le q/(C\sqrt{n\log q})+\norm{e}\le q/(C\sqrt{n\log q})+B_V\sqrt{m}\le 2q/(C\sqrt{n\log q}).
\]

If the prover received $r=1$, it first applies an Hamadard gate on each qubit of the first two registers, which gives the state 
\[
\left(\frac{1}{2\sqrt{2^n}}\sum_{c\in\{0,1\}}\sum_{d\in\{0,1\}^n}\left((-1)^{J(x_0)\cdot d}+(-1)^{J(x_0-s)\cdot d+c}\right)\ket{c}\ket{d}\right)\ket{y}.
\]
The prover then measures the first two registers, and sends to the verifier the outcome $(c,d)$. Since $(c,d)$ necessary satisfies the equality $J(x_0)\cdot d \equiv J(x_0-s)\cdot d+c \:(\bmod\:  2)$, and $d\in G_{s,0,x_0}$ with overwhelming probability due to the density of $G_{s,0,x_0}$, the verifier's check succeeds at Step 5 with overwhelming probability, i.e., probability at least $1-\delta$ for some negligible function $\delta$.

Since the actual state $\ket{\Phi}$ is only $\varepsilon$-close to $\Hhh_k$ (instead of being in $\Hhh_k$ as we assumed so far), using the triangular inequality we can conclude that the success probability on the actual state is at least $1-\delta-\varepsilon-2\sqrt{\varepsilon}\ge1-\delta-3\sqrt{\varepsilon}$.
\end{proof}
\subsection{Classical hardness}\label{subsec:cprotocol}
In this subsection we will use exactly the same parameters and hardness assumption as in \cite{Brakerski+FOCS18}. 

Let $\lambda$ be a security parameter. All the other parameters are functions of $\lambda$. Let $q$ be a prime. Let $\ell,n,m\ge 1$ be polynomially bounded functions of $\lambda$, and $B_L$, $B_V$ be positive integers such that the following conditions hold:
\begin{itemize}
\item
$n=\Omega(\ell \log q)$ and $m=\Omega(n\log q)$,
\item
$2\sqrt{n}\le B_L<B_V\le q$,
\item
$B_V/B_L$ is superpolynomial in $\lambda$.
\end{itemize}

Here is the main result of this subsection.
\begin{theorem}\label{th:classical}
Assume a choice of parameters as above.
Assume the hardness assumption $\LWE{\ell,q,D_{q,B_L}}$ holds. No polynomial-time classical prover can pass the test of Figure \ref{fig:full} with probability greater than $3/4+\mu$, for some negligible function $\mu$ of the security parameter $\lambda$.
\end{theorem}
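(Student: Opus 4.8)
My plan is to prove this exactly along the lines of the soundness analysis of Brakerski et al.~\cite{Brakerski+FOCS18}: the test in Figure~\ref{fig:full} is precisely theirs (we have modified only the quantum prover of Subsection~\ref{subsec:qprotocol}, which is irrelevant to classical hardness), so I would essentially transcribe their argument. The engine is a reduction turning a cheating polynomial-time classical prover into an adversary that violates the \emph{adaptive hardcore bit} property of the noisy trapdoor claw-free function family $f_k(b,x)=Ax+bu$ underlying the test. First I would invoke the result of \cite{Brakerski+FOCS18} that, under $\LWE{\ell,q,D_{q,B_L}}$ and in the parameter regime assumed here --- $q$ prime, $n=\Omega(\ell\log q)$, $m=\Omega(n\log q)$, $2\sqrt n\le B_L<B_V\le q$, and $B_V/B_L$ superpolynomial (this last condition powering the noise-flooding step that makes the distribution of $As+e$ sent in Step~1 essentially independent of which branch is taken) --- no polynomial-time classical algorithm, given only $(A,u)$ with $(A,t_A)\gets\gentrap(1^n,1^m,q)$, $s\gets\Int_q^n$, $e\gets D_{q,B_L}^m$, $u=As+e$, can output $(b,x,c,d)\in\{0,1\}\times\Int_q^n\times\{0,1\}\times\{0,1\}^{n\ceil{\log q}}$ with $d\in G_{s,b,x}$ and $c=d\cdot\big(J(x_0)\oplus J(x_0-s)\big)$, where $\{(0,x_0),(1,x_0-s)\}$ is the preimage pair of $f_k(b,x)$ (so $x_0=x$ if $b=0$ and $x_0=x+s$ if $b=1$), with probability exceeding $\tfrac12$ plus a negligible function of $\lambda$.

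Next I would set up the reduction. Assuming for contradiction a polynomial-time classical prover $P$ that passes the test with probability $\tfrac34+\mu$ for a non-negligible $\mu$, and exploiting that $P$ is classical and hence rewindable, I build an adversary $\Aa$ that on input $(A,u)$ hands $(A,u)$ to $P$ to obtain the message $y$ together with $P$'s internal state, then runs two independent continuations from that state --- one receiving challenge $r=0$ and returning a pair $(b,x)$, the other receiving $r=1$ and returning a pair $(c,d)$ --- and outputs $(b,x,c,d)$. The two structural facts I would then need, both inherited from \cite{Brakerski+FOCS18}, are: (i) if $(b,x)$ passes the verifier's $r=0$ check in Step~5, then $x$ is the genuine $b$-th preimage, i.e.\ it agrees with the output of $\invert$ on $y$ --- this rests on $A$ having distance at least $2q/(C\sqrt{n\log q})$, which makes a sufficiently close preimage unique, together with the acceptance-radius bookkeeping of Step~5; and (ii) the set identity $G_{s,1,x_0-s}=G_{0,x_0}\cap G_{1,x_0-s}=G_{s,0,x_0}$, which shows that a pair $(c,d)$ passing the $r=1$ check --- so $d\in G_{s,0,x_0}$ and $c=d\cdot(J(x_0)\oplus J(x_0-s))$ --- is a winning pair relative to $(b,x)$ for \emph{both} values of $b$. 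Granting (i) and (ii), whenever both continuations pass Step~5 the adversary $\Aa$ wins the adaptive-hardcore-bit game.

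For the probability accounting I would argue as follows: conditioned on $(A,u)$ and on $P$'s first message $y$, write $p_0$ and $p_1$ for the success probabilities of the two continuations; the overall success probability of $P$ equals $\E\big[\tfrac12(p_0+p_1)\big]\ge\tfrac34+\mu$, hence $\E[p_0]+\E[p_1]\ge\tfrac32+2\mu$, and since the continuations are independent given $(A,u,y)$ and $(1-p_0)(1-p_1)\ge 0$ pointwise, the winning probability of $\Aa$ is at least $\E[p_0p_1]\ge\E[p_0]+\E[p_1]-1\ge\tfrac12+2\mu$. After absorbing the negligible discrepancy between the distribution of $(A,As+e)$ produced by the verifier in Step~1 and the one in the cryptographic game --- using the near-uniformity of $\gentrap$ from Theorem~\ref{theorem:crypto} together with the noise-flooding argument --- this contradicts the adaptive hardcore bit bound, so no such $P$ exists, and the theorem follows with $\mu$ absorbing all the negligible terms.

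I expect the only genuine work to be facts (i) and (ii): in particular lining up the $2q/(C\sqrt{n\log q})$ acceptance radius of Step~5 against the distance of $A$ and against the error tolerance of $\invert$ in Theorem~\ref{theorem:crypto} (possibly at the cost of a constant-factor adjustment of the parameters to keep the ``close preimage'' unique), and checking the combinatorial identity in (ii). These are precisely the ``injective pair'' and ``good set'' lemmas already established in \cite{Brakerski+FOCS18}, and since our modifications touch only the quantum prover, nothing about them changes; the rest of the argument is the rewinding-plus-union-bound reduction sketched above.
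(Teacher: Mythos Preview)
Your proposal is correct and follows essentially the same route as the paper: rewind the classical prover from its state after Step~2, run both the $r=0$ and $r=1$ continuations to extract a tuple $(b,x,c,d)$, bound the probability that both checks pass by $\E[p_0+p_1]-1\ge \tfrac12+2\mu$, observe that a jointly successful tuple lands in the hardcore set $H_s$, and invoke the adaptive hardcore bit lemma (Lemma~4.7 of \cite{Brakerski+FOCS18}). The paper's proof is terser --- it applies a union bound directly rather than going through independence and $p_0p_1\ge p_0+p_1-1$, and it offloads your facts~(i), (ii) and the distribution-matching/noise-flooding discussion entirely to the cited lemma (which is already stated with noise parameter $B_V$, not $B_L$ as you wrote) --- but the substance is the same.
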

\begin{proof}
Consider a classical prover that passes the test with probability at least $3/4+\mu$ for some function $\mu$. 

Let us write $w$ the contents of the prover's memory and computation history at the end of Step 2 (note that $y$ can be recovered from $w$). Let $\Aa_0(w)$ be the algorithm the prover applies when it receives $0$ at Step 3, and $\Aa_1(w)$ be the algorithm the prover applies when it receives~$1$. Let consider the following strategy: Apply $\Aa_0(w)$ to get $(b,x)$, then rewind the computation and apply $\Aa_1(w)$ to get $(c,d)$, and finally output the 4-tuple $(b,x,d,c)$.

Let $p_0(w)$ denote the probability that the output of $\Aa_0(w)$ satisfies $\norm{Ax+bu-y}\le 2q/(C\sqrt{n\log q})$, and $p_1(w)$ denote the probability that the output of $\Aa_1(w)$ satisfies $c=d\cdot (J(x_0)\oplus J(x_0-s))$ and $d\in G_{s,0,x_0}$. Our assumption implies that $\E_w[p_0(w)/2+p_1(w)/2]\ge 3/4+\mu$. Thus the overall probability that $\norm{Ax+bu-y}\le 2q/(C\sqrt{n\log q})$, $c=d\cdot (J(x_0)\oplus J(x_0-s))$ and $d\in G_{s,0,x_0}$ all hold is at least 
\[
\E_w[1- (1-p_0(w))-(1-p_1(w))]=\E_w[(p_0(w)+p_1(w))-1]\ge  1/2+2\mu.
\]
In this case we have $x_0=x$ if $b=0$ and $x_0=x+s$ if $b=1$, and thus $c=d\cdot(J(x)\oplus J(x-(-1)^b s)$ holds in both cases.
Lemma 4.7 in \cite{Brakerski+FOCS18}, which we state for completeness in Appendix \ref{app}, guarantees that $\mu$ must be negligible.
\end{proof}

\bibliography{refs}
\appendix
\section{The Adaptive Hardcore Bit Lemma}\label{app}
For completeness, we reproduce below the statement of the adaptive hardcore bit lemma from \cite{Brakerski+FOCS18} on which the proof of Theorem \ref{th:classical} is based.
\begin{lemma}[Lemma 4.7 in \cite{Brakerski+FOCS18}]\label{th:LWE}
Assume a choice of parameters as in Section \ref{subsec:cprotocol}.
Assume the hardness assumption $\LWE{\ell,q,D_{q,B_L}}$ holds.
 Let $s\in\{0,1\}^n$. Write
\begin{align*}
H_s&=\left\{(b,x,d,d\cdot(J(x)\oplus J(x-(-1)^b s))\:|\: b\in\{0,1\},x\in\Int_q^n, d\in G_{s,b,x}\right\}\\
\overline{H}_s&=\left\{(b,x,d,c)\:|\:(b,x,d,c\oplus 1)\in H_s\right\}.
\end{align*}
Consider a pair $(A,As+e)$ generated as follows: generate $A$ using $\gentrap(1^n,1^m,q)$, then take $s\in\{0,1\}^n$ uniformly at random and~$e$ by sampling each coordinate independently according to the distribution $D_{\Int_q,B_V}$. Then for any polynomial-time algorithm $\Aa$ that receives as input the pair $(A,As+e)$ there exists a negligible function $\mu(\lambda)$ such that
\[
\Big|\Pr[\Aa(A,As+e)\in H_s] - \Pr[\Aa(A,As+e)\in \overline{H}_s]\Big|\le \mu(\lambda).
\]
\end{lemma}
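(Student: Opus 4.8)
The statement is a verbatim restatement of the adaptive hardcore bit lemma of Brakerski, Christiano, Mahadev, Vazirani and Vidick, so the plan is to follow their argument; I sketch the main steps. By Theorem~\ref{theorem:crypto} the matrix returned by $\gentrap(1^n,1^m,q)$ is statistically close to uniform, so it suffices to prove the claim when $A$ is uniform. Under $\LWE{\ell,q,D_{q,B_L}}$, a uniform $A$ is computationally indistinguishable from a ``lossy'' matrix $A=BC+F$, where $B\in\Int_q^{m\times\ell}$ and $C\in\Int_q^{\ell\times n}$ are uniform and $F\in\Int_q^{m\times n}$ has entries drawn from $D_{q,B_L}$ (apply the $\LWE{\ell,q,D_{q,B_L}}$ assumption column by column). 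Since $\Aa$ runs in polynomial time and receives no trapdoor, swapping the honest $A$ for a lossy $A$ changes both $\Pr[\Aa(A,As+e)\in H_s]$ and $\Pr[\Aa(A,As+e)\in\overline H_s]$ by a negligible amount, so it is enough to bound the difference of these two probabilities in the lossy mode, where the argument becomes information-theoretic.

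Next I would argue that in the lossy mode the secret $s\in\{0,1\}^n$ retains high min-entropy conditioned on the view $(A,As+e)$. Writing $As+e=B(Cs)+(Fs+e)$, the term $B(Cs)$ is a function of $Cs\in\Int_q^\ell$ and, using $s\in\{0,1\}^n$, $\normmax{F}\le B_L$ and the tail bound for $D_{q,B_V}$, the additive term $Fs+e$ satisfies $\normmax{Fs+e}\le nB_L+B_V\ll q$ by the parameter choices. A counting/smoothing argument then yields $H_\infty\!\bigl(s \mid A, As+e\bigr)\ge n-\ell\ceil{\log q}-O(\log\lambda)$ up to negligible statistical error; since $n=\Omega(\ell\log q)$ with a large enough constant, $\Omega(n)$ bits of entropy remain, and — what matters below — $\Omega(n)$ bits remain even when $s$ is restricted to any fixed half of its coordinates.

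The third ingredient is the algebraic identity that turns the hardcore bit into a linear function of $s$. For $s\in\{0,1\}^n$ one has coordinatewise $J(x_i)\oplus J(x_i-(-1)^b s_i)=s_i\cdot\bigl(J(x_i)\oplus J(x_i-(-1)^b)\bigr)$, whence $d\cdot\bigl(J(x)\oplus J(x-(-1)^b s)\bigr)=\langle I_{b,x}(d), s\rangle \bmod 2$, with $I_{b,x}$ exactly the map defined before the theorem. Membership of $d$ in $G_{s,b,x}=G_{0,x}\cap G_{1,x-s}$ (resp. $G_{0,x+s}\cap G_{1,x}$) forces $I_{b,x}(d)$ to be non-zero on the designated block of $n/2$ coordinates of $s$; combining this with the conditional min-entropy of $s$ restricted to that block, a leftover-hash / Fourier-bias bound shows that, conditioned on everything $\Aa$ outputs, the bit $\langle I_{b,x}(d), s\rangle\bmod 2$ is within negligible statistical distance of a fair coin. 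Since $H_s$ and $\overline H_s$ differ precisely in the value of this last bit, $\bigl|\Pr[\Aa(A,As+e)\in H_s]-\Pr[\Aa(A,As+e)\in\overline H_s]\bigr|$ is negligible, as claimed.

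The main obstacle is coupling the second and third steps: controlling the conditional min-entropy of $s$ in the lossy mode tightly enough, and matching it against the non-linearity hidden inside $J(\cdot)$ — binary decomposition together with modular subtraction with borrow. That non-linearity is exactly what forces the technical definitions of $G_{b,x}$ and $G_{s,b,x}$ with their ``half of the coordinates'' condition: one needs $I_{b,x}(d)$ to be supported on a part of $s$ that still carries entropy after the leakage $As+e$, and careful bookkeeping to establish the density bound $|G_{s,b,x}|\ge(1-2\cdot 2^{-n\ceil{\log q}/4})2^{n\ceil{\log q}}$ stated before the theorem. The remaining work — the LWE hybrids, the smoothing lemma, and the leftover-hash estimate — is standard, and the full details are in \cite{Brakerski+FOCS18}.
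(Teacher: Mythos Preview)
The paper does not prove this lemma at all: it is placed in the appendix explicitly ``for completeness'' and is a verbatim restatement of Lemma~4.7 of \cite{Brakerski+FOCS18}, with the proof deferred entirely to that reference. Your sketch is a faithful outline of the original BCMVV argument (lossy-mode hybrid via column-wise $\LWE{\ell,q,D_{q,B_L}}$, conditional min-entropy of $s$ in the lossy mode, the linearisation $d\cdot(J(x)\oplus J(x-(-1)^b s))=\langle I_{b,x}(d),s\rangle$, and a leftover-hash/extractor step exploiting the block condition in $G_{s,b,x}$), so you have in fact supplied more than the paper itself does; there is nothing to compare against beyond the bare citation.
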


\end{document}